\newtheorem{theorem}{Theorem}
\newtheorem{corollary}{Corollary}
\newtheorem*{ptheorem}{Pauli's Fundamental Theorem}
\newcommand{{\Cd}}{{\mathbb{C}^d}}
\newcommand{{\C}}{{\mathbb{C}}}
\DeclareMathOperator{\Tr}{Tr}
\begin{document}

\title{Candidate entanglement invariants for two Dirac spinors}

\author{Markus Johansson}
\affiliation{ICFO-Institut de Ciencies Fotoniques, The Barcelona Institute of Science and Technology,
08860 Castelldefels (Barcelona), Spain}

\date{\today}
\begin{abstract}
We consider two spacelike separated Dirac particles and construct five invariants under the spinor representations of the local proper orthochronous Lorentz groups. All of the constructed Lorentz invariants are identically zero for product states.
The behaviour of the Lorentz invariants under local unitary evolutions that act unitarily on any subspace with fixed particle momenta is studied.
All of the Lorentz invariants have invariant absolute values on such subspaces if the evolutions are generated by local zero-mass Dirac Hamiltonians. Some of them also for the case of nonzero-mass.
Therefore, they are considered potential candidates for describing spinor entanglement of two Dirac particles, with either zero or arbitrary mass.
Furthermore, their relations to the Wootters concurrence is investigated and their representations in the Foldy-Wouthuysen picture is given.

\end{abstract}
\maketitle
\section{Introduction}
The Dirac equation \cite{dirac2,dirac} is used in the Standard Model to describe leptons and quarks  \cite{schwartz}, in the Yukawa model of hadrons to describe baryons \cite{yukawa}, and in relativistic quantum chemistry \cite{pykk}. Without a mass term the Dirac equation admits solutions with definite chirality, so called Weyl particles \cite{weyl}. Dirac-like equations are also used to describe Dirac and Weyl particles in solid state and molecular systems as well as photonic crystals \cite{semenoff,bohm,lu,liu,pirie}. Emergent massless Dirac fermions have been experimentally demonstrated in graphene \cite{novos} and massless charged Weyl fermions have been demonstrated in Weyl semimetal \cite{Xu}.

A quantum system is entangled if it is in a superposition where some property of one subsystem is conditioned on a property of another subsystem. Then this property of the subsystem cannot be described independently of the property of the other subsystem.
The state of two spacelike separated entangled subsystems can therefore not be fully described in terms of local variables but have properties that require a non-local description \cite{epr,bell,chsh}. A commonly studied type of entanglement in non-relativistic quantum mechanics is that between the spins of two spacelike separated spin-$\frac{1}{2}$ particles \cite{bell,bennett,wootters,wootters2}. It can be characterized by the Wootters concurrence \cite{wootters,wootters2}, a polynomial in the state coefficients that is invariant up to a U(1) phase under local unitary evolution of the spins. In relativistic quantum mechanics the spinorial degree of freedom of a spin-$\frac{1}{2}$ particle is described by a four component Dirac spinor. 
Entanglement and the associated non-locality between the spinors of Dirac particles, as well as other descriptions of entanglement between Dirac particles, has previously been investigated and discussed in multiple works, see e.g. \cite{czachor,alsing,pachos,mano,caban3,caban,geng,leon,delgado,moradi,caban2,tessier,terno2,ahn,tera,tera2,terno,adami,won}.  

In this work we investigate the description of entanglement between the spinorial degrees of freedom of two Dirac particles from an algebraic point of view.
In particular we consider the problem of how to construct polynomial entanglement invariants \cite{hilbert,grassl,mumford} on a subspace defined by fixed particle momenta and spanned by spinorial degrees of freedom. 
We therefore search for polynomials that are zero for states prepared using only local resources, but not for all states. Moreover, we require that the polynomials are invariant up to a U(1) phase under local unitary evolution on any subspace defined by fixed particle momenta. Finally, we require that they are invariant under action by the spinor representation of the local proper orthochronous Lorentz groups. Such polynomials are considered potential candidates for describing entanglement of the spinorial degrees of freedom.

To construct such candidate entanglement invariants, we make the assumption that the state of any two spacelike separated Dirac spinors can be expanded in a basis that is formed from tensor products of the local basis elements used to describe single spinors. 
Given this assumption we derive five Lorentz invariants and describe their properties under local unitary evolutions generated by Dirac-like Hamiltonians with as well as without a mass term. We discuss their relation to the Wootters concurrence \cite{wootters,wootters2}, and finally investigate their representation in the Foldy-Wouthuysen picture \cite{foldy}.

The outline of this paper is as follows. Sections \ref{dir}-\ref{ham} review the relevant background material, discuss the physical assumptions made and introduce the tools used to construct the Lorentz invariants. 
In particular Section \ref{dir} introduces the description of Dirac and Weyl particles and discusses the physical assumptions. Section \ref{rep} describes the spinor representation of the Lorentz group and the charge conjugation transformation. In section \ref{invariants} we describe how to construct bilinear forms invariant under the spinor representation of the local proper orthochronous Lorentz transformations. 
Section \ref{ham} describes the behaviour of these bilinear forms under local unitary evolution generated by Dirac-like Hamiltonians.
Sections \ref{ent}-\ref{wout} contain the results.
In particular Section \ref{ent} describes how to construct candidate entanglement invariants and gives five such invariants which is the main results of this work. In Section \ref{exx} we consider a few examples of spinor entangled states. Section \ref{wout} gives the representation of the Lorentz invariants in the Foldy-Wouthuysen picture. Section \ref{diss} is the discussion and conclusions.
\section{Dirac spinors}\label{dir}

The Dirac equation was originally introduced in Ref. \cite{dirac2} to describe a relativistic spin-$\frac{1}{2}$ particle, or Dirac particle. For a particle with mass $m$ and charge $q$ in an electromagnetic four-potential $A_{\mu}(x)$ it can be written, with natural units $\hbar=c=1$, as
\begin{eqnarray}
\left[\sum_{\mu}\gamma^\mu(i\partial_{\mu}-qA_{\mu}(x)) -m\right]\psi(x)=0,
\end{eqnarray}
or equivalently as
\begin{eqnarray}\label{ggh}
\left[-\sum_{\mu=1,2,3}\gamma^0\gamma^\mu(i\partial_{\mu}-qA_{\mu}(x))+qA_0(x)I +m\gamma^0\right]\psi(x)=i\partial_0\psi(x),\nonumber\\
\end{eqnarray}
where $\psi(x)$ is a four component spinor 

\begin{eqnarray}\label{spinor}
\psi(x)\equiv
\begin{pmatrix}
\psi_0(x) \\
\psi_1(x)\\
\psi_2(x) \\
\psi_3(x) \\
\end{pmatrix},
\end{eqnarray}
which is a function of the four-vector $x$, and $\gamma^0,\gamma^1,\gamma^2,\gamma^3$ are $4\times 4$ matrices
that satisfy the Clifford relations
\begin{eqnarray}
\{\gamma^\mu,\gamma^\nu\}=2g^{\mu\nu}I,
\end{eqnarray}
where $g^{\mu\nu}$ is the Minkowski metric with signature $(+---)$.
The matrices $\gamma^0,\gamma^1,\gamma^2,\gamma^3$ are not uniquely defined by the Clifford relations and are chosen by convention.
One choice of these matrices that we use here is the so called Dirac matrices or { \it gamma matrices} defined as
\begin{align}
\gamma^0&=
\begin{pmatrix}
I & 0 \\
0 & -I  \\
\end{pmatrix},
&\gamma^1=
\begin{pmatrix}
0  & \sigma^1 \\
-\sigma^1 &  0 \\
\end{pmatrix},\nonumber\\
\gamma^2&=
\begin{pmatrix}
0  & \sigma^2 \\
-\sigma^2 &  0 \\
\end{pmatrix},
&\gamma^3=
\begin{pmatrix}
0  & \sigma^3 \\
-\sigma^3 &  0 \\
\end{pmatrix},
\end{align}
where $I$ is the $2\times 2$ identity matrix and $\sigma^1,\sigma^2,\sigma^3$ are the Pauli matrices
\begin{eqnarray}
I=
\begin{pmatrix}
1 & 0 \\
0 & 1  \\
\end{pmatrix},\phantom{o}
\sigma^1=
\begin{pmatrix}
0  & 1 \\
1 &  0 \\
\end{pmatrix},\phantom{o}
\sigma^2=
\begin{pmatrix}
0  & -i \\
i &  0 \\
\end{pmatrix},\phantom{o}
\sigma^3=
\begin{pmatrix}
1  & 0 \\
0 &  -1 \\
\end{pmatrix}.
\end{eqnarray}
For a derivation of the Dirac equation see e.g. Ref. \cite{dirac} Ch. XI. In the following we frequently suppress the four-vector dependence of $\psi(x)$ where it is not crucial and simply write $\psi$. 
The Dirac Hamiltonian $H_D$ can be identified in Eq. (\ref{ggh}) as 
\begin{eqnarray}
H_D=-\sum_{\mu=1,2,3}\gamma^0\gamma^\mu(i\partial_{\mu}-qA_{\mu})+qA_0I +m\gamma^0.
\end{eqnarray}

We can define two additional useful matrices in the algebra generated by the gamma matrices.
The first is the matrix
\begin{eqnarray}
C\equiv i\gamma^1\gamma^3=
\begin{pmatrix}
-\sigma^2 & 0 \\
0 & -\sigma^2  \\
\end{pmatrix},
\end{eqnarray}
which satisfies $C=C^\dagger=C^{-1}$ and for each gamma matrix $\gamma^\mu$ and its transpose $\gamma^{\mu T}$ it holds that
\begin{eqnarray}\label{uub}
\gamma^{\mu T}= C\gamma^\mu C.
\end{eqnarray}
The second is the matrix
\begin{eqnarray}
\gamma^5\equiv i\gamma^0\gamma^1\gamma^2\gamma^3=
\begin{pmatrix}
0 & I \\
I & 0  \\

\end{pmatrix},
\end{eqnarray}
which anticommutes with all the $\gamma^\mu$
\begin{eqnarray}
\gamma^5\gamma^\mu+\gamma^\mu\gamma^5=0,
\end{eqnarray}
and satisfies $\gamma^5=\gamma^{5\dagger}=(\gamma^{5})^{-1}$.

A modified Dirac equation without a mass term was considered by Weyl in Ref. \cite{weyl}
\begin{eqnarray}
\left[-\sum_{\mu=1,2,3}\gamma^0\gamma^\mu(i\partial_{\mu}-qA_{\mu})+qA_0I\right]\psi=i\partial_0\psi.
\end{eqnarray}
For this equation there are two invariant subspaces, defined by the projectors  $P_L=\frac{1}{2}(I-\gamma^5)$ and $P_R=\frac{1}{2}(I+\gamma^5)$, called the left- and right-handed chiral subspace, respectively. Solutions belonging to the right- or left-handed subspace, i.e., right-handed particles $\psi_R$ or left-handed particles $\psi_L$, are called Weyl particles and have the form 

\begin{eqnarray}\label{wey}
\psi_R=
\begin{pmatrix}
{\psi}_0 \\
{\psi}_1\\
{\psi}_0 \\
{\psi}_1 \\
\end{pmatrix},\phantom{o}
\psi_L=
\begin{pmatrix}
{\psi}_0 \\
{\psi}_1\\
-{\psi}_0 \\
-{\psi}_1 \\
\end{pmatrix}.
\end{eqnarray}

For any $t$ a solution to the Dirac equation can be expanded in basis modes $\phi_je^{i\bold{k}\cdot\bold{x}}$ as
\begin{eqnarray}
\psi(t,\bold{x})=\int_{\bold{k}}d\bold{k}\sum_{j}\psi_{j,\bold{k}}(t)\phi_je^{i\bold{k}\cdot\bold{x}},
\end{eqnarray}
where $\bold{k}$ is a wave three-vector, $\bold{x}$ is a spatial three-vector, $\psi_{j,\bold{k}}(t)$ are complex numbers, and $\phi_j$ is a basis for the spinor degree of freedom

\begin{eqnarray}\label{basis}
{\phi_0}=
\begin{pmatrix}
1 \\
0   \\
0   \\
0   \\
\end{pmatrix},\phantom{o}
{\phi_1}=
\begin{pmatrix}
0 \\
1   \\
0   \\
0   \\
\end{pmatrix},\phantom{o}
{\phi_2}=
\begin{pmatrix}
0 \\
0   \\
1   \\
0   \\
\end{pmatrix},\phantom{o}
{\phi_3}=
\begin{pmatrix}
0 \\
0   \\
0   \\
1   \\
\end{pmatrix}.
\end{eqnarray}

The Dirac inner product is defined as
\begin{eqnarray}\label{inn}
(\psi(t),\varphi(t))=\int_{\bold{x}}d\bold{x}\psi^\dagger(t,\bold{x})\varphi(t,\bold{x}).
\end{eqnarray}
We would like the basis modes to be orthogonal and normalizable with respect to this inner product. However, if we allow the modes to extend over all three-space, i.e., if we let them be infinite plane waves, the inner product $(\phi_je^{i\bold{k}\cdot\bold{x}},\phi_je^{i\bold{k'}\cdot\bold{x}})$ is ill defined. Since infinite plane waves are not supported on a bounded domain the inner product $(\phi_je^{i\bold{k}\cdot\bold{x}},\phi_je^{i\bold{k'}\cdot\bold{x}})$ does not converge for any $\bold{k}$ and $\bold{k'}$ and in particular is unbounded when $\bold{k}=\bold{k'}$. Thus the basis modes are neither orthogonal nor normalizable with respect to this inner product. Due to this problem Dirac opted to ad hoc impose the desired orthogonality relations $(\phi_je^{i\bold{k}\cdot\bold{x}},\phi_le^{i\bold{k'}\cdot\bold{x}})=\delta(\bold{k}-\bold{k'})\delta_{jl}$ where $\delta(\bold{k}-\bold{k'})$ is the Dirac delta and $\delta_{jl}$ is the Kronecker delta (See Ref. \cite{dirac} Ch. IV {\S} 23). Efforts to find a mathematically stringent framework where these relations hold led to the theory of generalized eigenfunctions and rigged Hilbert spaces \cite{gelfand,maurin}.
In this approach the momentum eigenmodes are not physically allowed states. Only Schwartz functions are allowed. These are functions that are smooth and such that for sufficiently large $\bold{x}$  the absolute values of the function and its derivatives to all orders decrease faster than any reciprocal power of $|\bold{x}|$ (See e.g. Ref. \cite{stein}). In this sense a Schwartz function is thus "localized" in a spatial region.
Moreover, Schwartz functions in spatial three-space are Schwartz functions also in momentum three-space. If the support in momentum space of a Schwartz function is compact and sufficiently small it becomes experimentally indistinguishable from the point support of a momentum eigenmode (See Appendix \ref{opp} for a discussion).

A different approach to have orthogonal and normalizable basis modes is to consider only modes in a finite spatial rectangular volume with periodic boundary conditions, so called box quantization (See e.g. Refs. \cite{mandl} and \cite{wightman}). Then the inner product as defined in Eq. (\ref{inn}) is well defined and there is a countable number of allowed momentum modes which satisfy the orthogonality relations $(\phi_je^{i\bold{k}\cdot\bold{x}},\phi_le^{i\bold{k'}\cdot\bold{x}})=V\delta_{\bold{k},\bold{k'}}\delta_{jl}$ where $\delta_{\bold{k},\bold{k'}}$ and $\delta_{jl}$ are Kronecker deltas and $V$ the volume of the box. For a sufficiently large box the discrete set of $\bold{k}$ becomes experimentally indistinguishable from a continuous set (See Appendix \ref{opp} for a discussion).
Note that box quantization is closely related to introducing an Infrared Cutoff (See e.g. Refs.   \cite{wightman} and \cite{duncan}).

In this work we use the momentum eigenbasis and assume orthogonality and normalizability of the modes. Moreover, as in References \cite{czachor,alsing,pachos,mano,moradi,caban,caban2,caban3} we consider states with definite momenta. This is strictly only possible if we assume boundary conditions that allow definite momenta in a finite spatial volume similarly to box quantization. In the rigged Hilbert space approach it can only be done as an approximation.
In the following we assume that the physical scenario is such that it is motivated to treat a particle as having a definite momentum while also being contained in a finite spatial volume, even if only as an approximate description.

With these caveats we can consider the four-dimensional subspace spanned by only spinorial degrees of freedom that is obtained by fixing $\bold{k}$, i.e., the subspace spanned by the modes $\phi_je^{i\bold{k}\cdot\bold{x}}$ for a fixed $\bold{k}$. The inner product $(\cdot,\cdot)_{\bold{k}}$ on such a subspace reduces to

\begin{eqnarray}
(\psi(t),\varphi(t))_{\bold{k}}=\psi^\dagger(t)\varphi(t).
\end{eqnarray}

Similarly to References \cite{alsing,pachos,moradi,caban2,caban3} we make the assumption that the state of any two spacelike separated particles that have not interacted can be described as a tensor product of single particle states $\psi_1(t)\otimes \psi_2(t)$. Furthermore, we assume that the tensor products of elements of the single particle bases $\phi_{j_1}e^{i\bold{k_1}\cdot\bold{x_1}}\otimes \phi_{j_2}e^{i\bold{k_2}\cdot\bold{x_2}}$ is a basis for the two-particle states.

In the same vein we describe the two particles as belonging to different Minkowski spaces. These two spaces should be understood as the local descriptions of spacetime used by two laboratories holding the respective particles. They could be the same Minkowski space described by two different spacelike separated observers or alternatively Minkowski tangent-spaces of two different points in spacetime in the sense of General Relativity (See e.g. Ref. \cite{wald}).

The assumption that a tensor product structure is appropriate to describe the state of the two particles in this situation is not trivial. The motivation for using it is the assumption that operations on one particle can be made jointly with operations on the other, i.e., that such operations commute. However, it is not clear that a description in terms of commuting operator algebras and a description in terms of tensor product spaces are always equivalent \cite{navascues,tsirelson,werner}. This open question is known as Tsirelson's Problem but for the case of two observers and finite dimensional algebras of operators it has been argued that a tensor product structure is not a restrictive assumption \cite{tsirelson,werner}.

The Dirac or Weyl particles in solid state or molecular systems are quasiparticles with a linear dispersion relation. The physical interpretation of such particles and the Hamiltonians that describe their evolution is thus quite different from that of Dirac or Weyl particles in relativistic quantum mechanics. Nevertheless, they can be described by four component spinors obeying Dirac-like equations.
An effective Dirac particle in the 2D Dirac semimetal graphene is described by a Hamiltonian that can be expressed on the form 

\begin{eqnarray}\label{2d}
H_{2D}=iv_D\gamma^0\sum_{\mu=1,2}\gamma^\mu\partial_\mu-\mu_PI,
\end{eqnarray}
where $v_D$ is the Dirac velocity and $\mu_P$ is the deviation of the chemical potential from its half-filling value (See e.g. Ref. \cite{kotov} or \cite{vass}). 
Similarly, the Hamiltonian for a massless linear Dirac particle in a 3D Dirac semimetal can be expressed on the general form \cite{bohm}
\begin{eqnarray}\label{3d}
H_{3D}=iv_D\gamma^0\sum_{\mu=1,2,3}\gamma^\mu\partial_\mu.
\end{eqnarray}
Examples of 3D Dirac semimetals include sodium bismuthide (Na$_3$Bi) \cite{fang,liu} and cadmium arsenide (Cd$_3$As$_2$) \cite{fang2,neupane,cava}.

\section{Spinor representation of the Lorentz group and the charge conjugation}\label{rep}

In a spacetime described by General Relativity one can at every non-singular point define the tangent vector space. This tangent space is isomorphic to the Minkowski space (See e.g. Ref. \cite{wald}).
Here we make the assumption that we can neglect the curvature of spacetime and describe the Dirac particle as belonging to the Minkowski tangent space rather than the underlying spacetime manifold.

A Lorentz transformation on the local Minkowski tangent space to a point in spacetime does not only transform the tangent space coordinates, but also induces an action on the spinor in the point. This action is given by a representation of the Lorentz transformation. Let $\Lambda$ be the Lorentz transformation and $S(\Lambda)$ the induced action on the spinor. Then $\psi(x)\to \psi'(x')=S(\Lambda)\psi(x)$ where $x'=\Lambda x$ (See e.g. Ref. \cite{zuber}), and the Dirac equation transforms as
\begin{eqnarray}
&&\left[\sum_{\mu}\gamma^\mu(i\partial_{\mu}-qA_{\mu}) -m\right]\psi(x)=0\nonumber\\
\to&&\left[\sum_{\mu,\nu}\gamma^\mu(\Lambda^{-1})^{\nu}_{\mu}(i\partial_{\nu}-qA_{\nu}) -m\right]S(\Lambda)\psi(x)=0.
\end{eqnarray}
Invariance of the equation implies that
\begin{eqnarray}
S^{-1}(\Lambda)\gamma^\mu S(\Lambda)=\sum_\nu\Lambda^{\mu}_{\nu}\gamma^\nu.
\end{eqnarray}
Thus, a Lorentz transformation 
corresponds in this way to a transformation of the matrices $\gamma^\mu$, but due to the invariance of the Dirac equation the new matrices $\gamma'^\mu$ must satisfy the same Clifford relations $\{\gamma'^\mu,\gamma'^\nu\}=2g^{\mu\nu}I$.
By Pauli's Fundamental Theorem this implies that the transformation $S(\Lambda)$ is non-singular and unique up to a constant factor.

\begin{ptheorem}
If for two sets of $4\times4$ matrices $\gamma^a$ and $\gamma'^a$ it holds that $\{\gamma^\mu,\gamma^\nu\}=2g^{\mu\nu}I=\{\gamma'^\mu,\gamma'^\nu\}$, then there exist a nonsingular $S$ such that $\gamma'^\mu=S\gamma^\mu S^{-1}$, and $S$ is unique up to a multiplicative constant.
\end{ptheorem}
\begin{proof}
See e.g. Ref. \cite{pauli} or \cite{messiah}.
\end{proof}

The Lorentz group is a Lie group and the connected component that contains the identity element, the proper orthochronous Lorentz group, can be generated by exponentials of its Lie algebra. Likewise, the representation of the proper orthochronous Lorentz group acting on a spinor is a connected Lie group.
The generators $S^{\rho\sigma}$ of the Lie algebra of the spinor representation are defined by

\begin{eqnarray}\label{gene}
S^{\rho\sigma}=\frac{1}{4}[\gamma^\rho,\gamma^\sigma]=\frac{1}{2}\gamma^\rho\gamma^\sigma-\frac{1}{2}g^{\rho\sigma}I.
\end{eqnarray}
A finite transformation can be obtained as an exponential
 
\begin{eqnarray}
S(\Lambda)=\exp\left(\frac{1}{2}\sum_{\rho,\sigma} \omega_{\rho\sigma}S^{\rho\sigma}\right),
\end{eqnarray}
where $\omega_{\rho\sigma}$ are real numbers. Products of these finite transformations can be used to describe 
the spinor representation of any proper orthochronous Lorentz transformation. 
See e.g. Ref. \cite{zuber}. This holds since the spinor representation of the proper orthochronous Lorentz group is a connected matrix Lie group (See e.g. Ref. \cite{hall} Ch. 3.8).

The Lorentz group has three other connected components besides the  proper orthochronous subgroup. These are related to the proper orthochronous subgroup by the parity inversion transformation P , the time reversal transformation T, and the combined parity time transformation PT, respectively. Likewise, the spinor representation of the Lorentz group has three other connected components
related to the spinor representation of the proper orthochronous subgroup by the the spinor representations of the parity inversion P, the time reversal T, and the parity time transformation PT, respectively. The spinor representations of P and T are defined up to multiplication by a U(1) phase that is chosen by convention.
The representation of the parity transformation on the spinor can be chosen as

\begin{eqnarray}
S(\textrm{P})=\gamma^0.
\end{eqnarray}
The representation of the time reversal transformation involves the matrix $C$ and a complex conjugation of the spinor. Time reversal T of a spinor $\psi$ can be chosen as $\psi \to C\psi^*$. 

In addition to the Lorentz group we may consider also the charge conjugation transformation C, charge parity CP, as well as the charge parity time CPT transformation. As for P and T, the spinor representation of the charge conjugation is defined up to multiplication by a U(1) phase that is chosen by convention.
Charge conjugation like time reversal involves complex conjugation and can be chosen as $\psi \to i\gamma^2\psi^*$. The CP transformation is the combination of charge conjugation and parity inversion and is given by $\psi \to -i\gamma^0\gamma^2\psi^*= iC\gamma^5\psi^*$. 
The CPT transformation is the combination of charge conjugation, parity inversion, and time reversal and is given by the matrix $-i\gamma^5$

\begin{eqnarray}
S(\textrm{CPT})=-i\gamma^5.
\end{eqnarray}
See e.g. Ref. \cite{bjorken}.
In the following we use these conventions for the spinor representations of the P, T, and C transformations.

\section{Invariants of the spinor representation of the proper orthochronous Lorentz group}
\label{invariants}
A physical quantity is said to be Lorentz covariant if it transforms under some representation of the Lorentz group.  In particular, a Lorentz covariant scalar that remains the same under all Lorentz transformations is said to be a Lorentz invariant scalar. A Lorentz covariant scalar that changes sign under parity inversion but is invariant under all other Lorentz transformations is said to be a Lorentz pseudo-scalar. In the following we refer to a covariant scalar that is invariant under the proper orthochronous Lorentz group as a Lorentz invariant. Thus, we do not distinguish between Lorentz scalars and pseudo-scalars and call both Lorentz invariants.

From the form of the generators $S^{\rho\sigma}$ of the spinor representation of the proper orthochronous Lorentz group in Eq. (\ref{gene}) and the properties of the matrix $C$ given in Eq. (\ref{uub}) we see that
\begin{eqnarray}
S^{\rho\sigma T}=\frac{1}{4}[\gamma^{\sigma T},\gamma^{\rho T}]=-\frac{1}{4}C[\gamma^{\rho},\gamma^{\sigma}]C=-CS^{\rho\sigma}C.
\end{eqnarray}
Therefore, for a finite transformation $S(\Lambda)$ we have that $S(\Lambda)^T=CS(\Lambda)^{-1}C$ or equivalently $S(\Lambda)^TC=CS(\Lambda)^{-1}$.
Using this we can construct a Lorentz invariant from spinors $\psi$ and $\varphi$ as a bilinear form
\begin{eqnarray}
\psi^TC\varphi,
\end{eqnarray}
which transforms as $\psi^TS(\Lambda)^TCS(\Lambda)\varphi=\psi^TCS(\Lambda)^{-1}S(\Lambda)\varphi=\psi^TC\varphi$ under the spinor representations of proper orthochronous Lorentz transformations (See e.g. Ref. \cite{pauli}). Moreover, since $\gamma^0=(\gamma^{0})^{T}=(\gamma^{0})^{-1}$ and $\gamma^0C=C\gamma^0$ it follows that $\psi^TS(\textrm{P})^TCS(\textrm{P})\varphi=\psi^T\gamma^0C\gamma^0\varphi=\psi^TC\varphi$. Thus, $\psi^TC\varphi$ is invariant under parity transformation.
Since $S(\textrm{CPT})=-i\gamma^{5}$ and $\gamma^5C=C\gamma^5$ and $\gamma^5=(\gamma^{5})^{T}=(\gamma^{5})^{-1}$ it follows that $\psi^TS(\textrm{CPT})^TCS(\textrm{CPT})\varphi=-\psi^T\gamma^5C\gamma^5\varphi=-\psi^TC\varphi$. Thus, $\psi^TC\varphi$ is not invariant under CPT transformations.

Next we can see that since $\gamma^5$ anti-commutes with all $\gamma^{\mu}$ it commutes with the generators $S^{\rho\sigma}$
\begin{eqnarray}
[S^{\rho\sigma},\gamma^5]=\frac{1}{4}[\gamma^{\rho},\gamma^{\sigma}]\gamma^5-\frac{1}{4}\gamma^5[\gamma^{\rho},\gamma^{\sigma}]=0,
\end{eqnarray}
and thus $\gamma^5$ commutes with the spinor representations of proper orthochronous Lorentz transformations $S(\Lambda)\gamma^5=\gamma^5S(\Lambda)$.
Therefore, we can construct an invariant of the spinor representation of the proper orthochronous Lorentz group as the bilinear form
\begin{eqnarray}
\psi^TC\gamma^5\varphi,
\end{eqnarray}
which transforms as $\psi^TS(\Lambda)^TC\gamma^5S(\Lambda)\varphi=\psi^TCS(\Lambda)^{-1}\gamma^5S(\Lambda)\varphi=\psi^TC\gamma^5S(\Lambda)^{-1}S(\Lambda)\varphi=\psi^TC\gamma^5\varphi$.
Moreover, since $\gamma^0\gamma^5=-\gamma^5\gamma^0$ it follows that $\psi^TS(\textrm{P})^TC\gamma^5S(\textrm{P})\varphi=\psi^T\gamma^0C\gamma^5\gamma^0\varphi=-\psi^TC\gamma^5\varphi$. Thus, $\psi^TC\gamma^5\varphi$ is not invariant under parity transformation but changes sign.
Since $S(\textrm{CPT})=-i\gamma^{5}$ it follows that $\psi^TS(\textrm{CPT})^TC\gamma^5S(\textrm{CPT})\varphi=-\psi^T\gamma^5C\gamma^5\gamma^5\varphi=-\psi^TC\gamma^5\varphi$. Thus, $\psi^TC\gamma^5\varphi$ is not invariant under CPT transformations but changes sign.

\section{The behaviour of the Lorentz invariants under unitary spinor evolution generated by Dirac-like Hamiltonians}\label{ham}
Here we consider a subspace spanned by only spinorial degrees of freedom, i.e., a subspace spanned by $\phi_je^{i\bold{k}\cdot\bold{x}}$ for a fixed $\bold{k}$. We consider an evolution that is generated by a Hamiltonian operator $H$ and acts unitarily on all such subspaces. Then it is required that $(\phi_je^{i\bold{k}\cdot\bold{x}},H\phi_le^{i\bold{k'}\cdot\bold{x}})\propto\delta_{\bold{k},\bold{k'}}$, otherwise the subspaces are not invariant under the evolution. Therefore, to have unitary action on the subspaces we consider Hamiltonians that do not depend on $\bold{x}$.

We consider again the inner product
\begin{eqnarray}
({\psi(t)},{\varphi(t)})_{\bold{k}}=\psi^{\dagger}(t)\varphi(t),
\end{eqnarray}
and assume that $H(s)$ is bounded and strongly continuous, i.e., $\lim_{t\to s}||H(t){\psi}-H(s){\psi}||=0$ for all ${\psi}$ and $s$ where $||\cdot||$ is the norm induced by the inner product. Then we have the following theorem.

\begin{theorem}\label{th}
Assume that $t\in\mathbb{R}\to H(t)$ is a strongly continuous map into the bounded Hermitian operators on a Hilbert space $\mathcal{H}$. Then there exists an evolution operator $U(t,s)$ such that for all ${\psi}\in\mathcal{H}$ it holds that
${\psi(t)}=U(t,s){\psi(s)}$ and $\partial_{t}U(t,s)=-iH(t)U(t,s)$. Moreover, the evolution operator satisfies $U(r,s)U(s,t)=U(r,t)$ and $U(t,t)=I$ and can be expressed as an ordered exponential
\begin{eqnarray*}
U(t,r)=&&\mathcal{T}_{\leftarrow}\{e^{-i\int_{r}^tH(s)ds}\}\nonumber\\
\equiv&&\sum_{n=0}^\infty(-i)^n\int_{r}^t\int_{r}^{s_n}\int_{r}^{s_{n-1}}\dots\nonumber\\
&&\dots\int_{r}^{s_{2}}H(s_n)\dots H(s_{1})ds_1\dots ds_{n-2} ds_{n-1} ds_{n}.\nonumber\\
\end{eqnarray*}
\end{theorem}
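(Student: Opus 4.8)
The plan is to build $U(t,r)$ directly from the Dyson series displayed in the statement, verify the differential equation and the boundary value $U(t,t)=I$ from it, and then obtain both the composition law and uniqueness from a single Gr\"onwall estimate. Throughout, the one genuinely analytic ingredient is a uniform norm bound, so I would establish that first. For each fixed $\psi$ the map $s\mapsto H(s)\psi$ is continuous, hence bounded on any compact interval $[r,t]$; the uniform boundedness principle then promotes these pointwise bounds to $M:=\sup_{s\in[r,t]}\|H(s)\|<\infty$. An induction on $n$, in which the uniform bound controls one factor while strong continuity is applied to the next, shows that $(s_1,\dots,s_n)\mapsto H(s_n)\cdots H(s_1)\psi$ is jointly continuous, so each time-ordered integral is a well-defined (Bochner) integral. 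Since the ordered simplex $r\le s_1\le\cdots\le s_n\le t$ has volume $|t-r|^n/n!$, the $n$-th term is bounded in operator norm by $M^n|t-r|^n/n!$; the series therefore converges absolutely in operator norm, uniformly for $t$ in compact sets, and defines a bounded $U(t,r)$ with $U(t,t)=I$ since only the $n=0$ term survives.

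Next I would rewrite the series as the Volterra integral equation $U(t,r)=I-i\int_r^t H(s)\,U(s,r)\,ds$, which follows by factoring the outermost (latest-time) integration out of every term: the $n$-th term equals $-i\int_r^t H(s_n)\,U_{n-1}(s_n,r)\,ds_n$, where $U_{n-1}(\cdot,r)$ denotes the $(n-1)$-th term of the series. Because the integrand $s\mapsto H(s)U(s,r)$ is strongly continuous, the fundamental theorem of calculus gives $\partial_t U(t,r)=-iH(t)U(t,r)$ with $U(r,r)=I$, which is the asserted evolution equation. Hermiticity of $H$ then yields unitarity: differentiating and using $\partial_t U^\dagger=iU^\dagger H$ gives $\partial_t(U^\dagger U)=iU^\dagger HU-iU^\dagger HU=0$, so $U(t,r)^\dagger U(t,r)=U(r,r)^\dagger U(r,r)=I$, and the same computation in the second argument gives $UU^\dagger=I$.

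For uniqueness and the composition law I would prove one lemma by Gr\"onwall's inequality: if $V$ and $W$ both solve $\partial_r X=-iH(r)X$ and agree at a point $r_0$, then $D=V-W$ satisfies $D(r)=-i\int_{r_0}^r H(s)D(s)\,ds$, hence $\|D(r)\|\le M\left|\int_{r_0}^r\|D(s)\|\,ds\right|$, and Gr\"onwall's inequality forces $D\equiv 0$. To get $U(r,s)U(s,t)=U(r,t)$, regard both sides as functions of $r$: each solves $\partial_r X(r)=-iH(r)X(r)$ (using $\partial_r U(r,s)=-iH(r)U(r,s)$ for the left side), and both equal $U(s,t)$ at $r=s$, so they coincide by the lemma. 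The same lemma shows the solution of the initial value problem is unique, so $U$ is the only evolution operator with these properties.

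I expect the main obstacle to be purely functional-analytic and confined to the first paragraph: since $H$ is assumed only strongly continuous rather than norm continuous, one must invoke Banach--Steinhaus to obtain the uniform bound $M$ and check that the ordered products form a jointly strongly continuous, Bochner-integrable family before the series manipulations are legitimate. Once these facts are secured everything else is routine. I note that in the finite-dimensional spinor space relevant to this paper strong and norm continuity coincide and every operator is automatically bounded, so this obstacle disappears and the construction reduces to the elementary convergence of the Dyson series.
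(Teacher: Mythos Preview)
Your argument is correct and is precisely the standard Dyson-series construction one finds in the reference the paper cites; the paper itself gives no proof beyond ``See e.g.\ Ref.~\cite{reed}.'' One small point: your passage from $U^\dagger U=I$ to $UU^\dagger=I$ via ``the same computation in the second argument'' tacitly uses $\partial_r U(t,r)=iU(t,r)H(r)$, which you have not derived; it follows by differentiating the innermost integral in the series (or, more cleanly, from the composition law $U(r,t)U(t,r)=I$ once that is established), so you may want to reorder or make this step explicit.
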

\begin{proof}
See e.g. Ref. \cite{reed}.
\end{proof} 
We can consider the conjugate transpose, and complex conjugate in the given basis, of the evolution operator
\begin{eqnarray}
U(t,r)&&=\mathcal{T}_{\leftarrow}\{e^{-i\int_{r}^tH(s)ds}\}\nonumber\\
U(t,r)^\dagger &&=\mathcal{T}_{\rightarrow}\{e^{i\int_{r}^tH(s)ds}\}\nonumber\\
U(t,r)^*&&=\mathcal{T}_{\leftarrow}\{e^{i\int_{r}^t H^T(s)ds}\}\nonumber\\
(U(t,r)^\dagger )^*&&=\mathcal{T}_{\rightarrow}\{e^{-i\int_{r}^t H^T(s)ds}\}.
\end{eqnarray}

Assume now that for a time independent $X$ it holds that $XH(s)=-H(s)^TX$ for all $s$. Then we can see that $X\mathcal{T}_{\leftarrow}\{e^{-i\int_{0}^tH(s)ds}\}=\mathcal{T}_{\leftarrow}\{e^{i\int_{0}^tH(s)^Tds}\}X$, and it follows that
\begin{eqnarray}
&&{\psi^T} (U(t,0)^\dagger )^* XU(t,0){\varphi}\nonumber\\&&={\psi^T}\mathcal{T}_{\rightarrow}\{e^{-i\int_{0}^t H^T(s)ds}\}
\mathcal{T}_{\leftarrow}\{e^{i\int_{0}^t H^T(s)ds}\}X{\varphi}={\psi^T}X{\varphi}.
\end{eqnarray}

We now investigate the candidates $X=C$ and $X=C\gamma^5$.
For products of distinct gamma matrices we have that

\begin{eqnarray}\label{c}
(\gamma^\mu)^TC&=&C\gamma^\mu\nonumber\\
(\gamma^\mu\gamma^\nu)^TC&=&-C\gamma^\mu\gamma^\nu\nonumber\\
(\gamma^\mu\gamma^\nu\gamma^\rho)^TC&=&-C\gamma^\mu\gamma^\nu\gamma^\rho\nonumber\\
(\gamma^\mu\gamma^\nu\gamma^\rho\gamma^\sigma)^TC&=&C\gamma^\mu\gamma^\nu\gamma^\rho\gamma^\sigma,
\end{eqnarray}
and
\begin{eqnarray}\label{c5}
(\gamma^\mu)^TC\gamma^5&=&-C\gamma^5\gamma^\mu\nonumber\\
(\gamma^\mu\gamma^\nu)^TC\gamma^5&=&-C\gamma^5\gamma^\mu\gamma^\nu\nonumber\\
(\gamma^\mu\gamma^\nu\gamma^\rho)^TC\gamma^5&=&C\gamma^5\gamma^\mu\gamma^\nu\gamma^\rho\nonumber\\
(\gamma^\mu\gamma^\nu\gamma^\rho\gamma^\sigma)^TC\gamma^5&=&C\gamma^5\gamma^\mu\gamma^\nu\gamma^\rho\gamma^\sigma.
\end{eqnarray}
From this we can conclude that a Hamiltonian with terms of the form
\begin{eqnarray}\label{w2}
H^{2,3}(t)=\gamma^\mu\gamma^\nu\phi_{\mu\nu}(t)+\gamma^\mu\gamma^\nu\gamma^\rho\kappa_{\mu\nu\rho}(t),
\end{eqnarray}
satisfies $CH^{2,3}(t)=-(H^{2,3}(t))^TC$. Moreover, a Hamiltonian with terms of the form
\begin{eqnarray}\label{w1}
H^{1,2}(t)=\gamma^\mu\eta_{\mu}(t)+\gamma^\mu\gamma^\nu\lambda_{\mu\nu}(t),
\end{eqnarray}
satisfies $C\gamma^5H^{1,2}(t)=-(H^{1,2}(t))^TC\gamma^5$.
A Hamiltonian term $H^0(t)=f(t)I$ proportional to the identity clearly is its own transpose and commutes with both $C$ and $C\gamma^5$.

The Dirac Hamiltonian has a term that is first degree in gamma matrices, the mass term $m\gamma^0$, and a second degree term $\sum_{\mu=1,2,3}\gamma^{0}\gamma^{\mu}(i\partial_{\mu}-qA_{\mu}(t))$.
Moreover, it has a term proportional to the identity, the coupling to the scalar potential $qA_0(t)I$. However, we can perform a change of variables to remove any such zeroth degree term from the Hamiltonian. If we define
$\psi'= e^{-i\theta(t)}\psi$ the new Hamiltonian $H'$ satisfying $i\partial_t\psi'(t)=H'\psi'(t)$ is 
$H'=H+\gamma^0\sum_\mu\gamma^\mu\partial_\mu \theta(t)$. This amounts to a change of local U(1) gauge (See e.g. Ref. \cite{griffiths}). By choosing $\theta(t)=-q\int_{t_0}^tA_0(s)ds$ we see that the term proportional to the identity in $H'$ is $qA_0(t)I-q\partial_t \int_{t_0}^tA_0(s)dsI=0$. Apart from the zeroth degree term $H'$ in general has terms of the same degrees in the gamma matrices as $H$ and cannot gain terms with degrees different from those of $H$. Let $U'(t,t_0)$ be the evolution generated by $H'$. Then $\psi(t)=e^{-iq\int_{t_0}^tA_0(s)ds}\psi'(t)=e^{-iq\int_{t_0}^tA_0(s)ds}U'(t,t_0)\psi'(t_0)=e^{-iq\int_{t_0}^tA_0(s)ds}U'(t,t_0)\psi(t_0)$.

Therefore, for any evolution $U_D(t,0)$ generated by Dirac-type Hamiltonians  we can see that 

\begin{eqnarray}
&&\psi^{T}(U_D(t,0)^\dagger)^*C\gamma^5U_D(t,0)\varphi\nonumber\\
&&=e^{-2iq\int_{0}^tA_0(s)ds}{\psi^T}({U'}_D(t,0)^\dagger)^* C\gamma^5 U'_D(t,0){\varphi}\nonumber\\
&&=e^{-2iq\int_{0}^tA_0(s)ds}{\psi^T}C\gamma^5{\varphi}.
\end{eqnarray}

Likewise, for any evolution $U_W(t,0)$ generated by zero-mass Dirac Hamiltonians we have 

\begin{eqnarray}
&&\psi^{T}(U_W(t,0)^\dagger)^*CU_W(t,0)\varphi \nonumber\\
&&=e^{-2iq\int_{0}^tA_0(s)ds}{\psi^T}({U'}_W(t,0)^\dagger)^*CU'_W(t,0){\varphi}\nonumber\\
&&=e^{-2iq\int_{0}^tA_0(s)ds}{\psi^T}C{\varphi}.
\end{eqnarray}

We can consider a few additional or alternative Hamiltonian terms from the literature. For example a coupling to a Yukawa scalar boson $g\gamma^0\phi$ \cite{yukawa}, but this term behaves analogously to the mass term. Coupling to a Yukawa pseudo-scalar boson $gi\gamma^0\gamma^5\phi$ on the other hand is cubic in the gamma matrices. So is a Pauli-coupling $i\gamma^0\gamma^\mu\gamma^\nu\partial_{\nu}A_\mu$ to the vector potential (See e.g. \cite{das}). An electroweak type chiral coupling term to a vector boson, e.g. $\sum_\mu g\gamma^0\gamma^\mu(I\pm\gamma^5)Z_{\mu}$ \cite{weinberg}, contains terms of degree 2 and 4 in the gamma matrices.

Thus,
${\psi^T}C\gamma^5{\varphi}$
is invariant under evolution generated by Dirac Hamiltonians up to a U(1) phase. More generally it is invariant under evolution generated by Hamiltonians of the type $H^{1,2}(t)+H^0(t)$ up to a U(1) phase.
Likewise,
${\psi^T}C{\varphi}$
is invariant up to a U(1) phase under evolution generated by zero-mass Dirac Hamiltonians. More generally it is invariant, up to a U(1) phase, under evolution generated by Hamiltonians of the type $H^{2,3}(t)+H^0(t)$, which could include a pseudo-scalar Yukawa term or a Pauli coupling.

Neither ${\psi^T}C{\varphi}$ nor ${\psi^T}C\gamma^5{\varphi}$ is invariant under evolution generated by Hamiltonians that contain both a mass (or Yukawa scalar) term and a Yukawa pseudo-scalar or Pauli coupling term. Moreover, neither is invariant under evolution generated by
Hamiltonians with electroweak type chiral coupling to a vector boson.

We can consider the bilinear forms ${\psi^T}C{\varphi}$ or ${\psi^T}C\gamma^5{\varphi}$ also in the context of Dirac or Weyl particles in solid state or molecular systems.
The Hamiltonian in Eq. (\ref{2d}) for Dirac particles in 2D Dirac semimetals  \cite{kotov,vass}, and 
the Hamiltonian in Eq. (\ref{3d}) for zero-mass Dirac particles in 3D Dirac semimetals \cite{bohm} have only second and zeroth degree terms in the gamma matrices.
Thus, in both cases ${\psi^T}C{\varphi}$ and ${\psi^T}C\gamma^5{\varphi}$ are invariant up to a U(1) phase.

Additional terms that can be introduced in the Hamiltonians is the Semenoff mass term $M_S\gamma^0\gamma^3$ \cite{semenoff} and the Haldane mass term $M_{H}\gamma^5\gamma^0\gamma^3$ \cite{haldane}. Since the Semenoff and Haldane mass terms are both quadratic in gamma matrices, ${\psi^T}C{\varphi}$ and ${\psi^T}C\gamma^5{\varphi}$ are still invariant up to a U(1) phase with these additions.

\section{Candidate entanglement invariants}\label{ent}

In this section we consider the issue of describing entanglement between two Dirac spinors and construct five candidate entanglement invariants which is the main result of this work. 

\subsection{Defining and describing spinor entanglement properties}\label{spen}

A system of two particles can in general be entangled in a multitude of qualitatively different ways.  
If two entangled states can be deterministically transformed into each other by local unitary evolution of the system and changes of local reference frames they may be considered to be entangled in equivalent ways. Moreover, if two states are identical up to multiplication by a constant factor they may be considered physically equivalent. Therefore, we may consider two entangled states to be equivalently entangled if the states can be transformed into each other by local unitary evolutions, changes of local reference frames and multiplication by a constant factor, and inequivalently entangled otherwise.
In the following we refer to the local unitary evolutions and changes of local reference frames collectively as {\it local reversible operations}.
In general a number of different properties of the entanglement can be identified and used to distinguish inequivalent types of entanglement. Any such property describing the entanglement must be unchanged by local reversible operations \cite{popescu,pop2,ben}. Moreover, no entanglement property should be present in states that can be created using only local resources, i.e., product states.
A general approach to the characterization of this kind of entanglement properties has been described in References \cite{popescu,pop2}.
We follow this general approach here and outline it below. 

Any state $\psi_{AB}$ of a system belongs to a set $\mathcal{O}_{\psi_{AB}}$ that consists of all states that can be obtained from $\psi_{AB}$ by local reversible operations. We refer to such a set $\mathcal{O}_{\psi_{AB}}$, as an {\it orbit} of the local reversible operations. Any two different orbits are disjoint and the Hilbert space can be fully decomposed into the collection of all such orbits. If the orbit $\mathcal{O}_{\psi_{AB}}$ can be isomorphically mapped to the orbit $\mathcal{O}_{\phi_{AB}}$ by a map $m_c:\psi_{AB}\to c\psi_{AB}$ for some $c\in \mathbb{C}-\{0\}$, i.e., if the two orbits are identical up to elementwise multiplication by a nonzero constant $c$, we can consider them physically equivalent. Let $\tilde{\mathcal{O}}_{\psi_{AB}}$ be the equivalence class of orbits that can be obtained from $\mathcal{O}_{\psi_{AB}}$ by maps $m_{c}$ for all $c\in \mathbb{C}-\{0\}$.

If two states belong to different equivalence classes they differ in some physical property that cannot be changed by local reversible operations.
Thus two entangled states $\psi_{AB}$ and $\phi_{AB}$ such that $\tilde{\mathcal{O}}_{\psi_{AB}}\neq\tilde{\mathcal{O}}_{\phi_{AB}}$ are by definition inequivalently entangled, i.e., entangled in qualitatively different ways. This approach for describing different kinds of entanglement in terms of inequivalence under local reversible operations has been used for various systems of non-relativistic spin-$\frac{1}{2}$ particles (See e.g. References \cite{ekert,grassl,popescu,pop2,kempe,lind2,car2,higuchi,tarrach,toni,sud,toumazet}).

A way to characterize the different inequivalent types of entanglement in a system is to find parameters that can distinguish between the different equivalence classes of entangled states. If we require that the parameters only distinguish different inequivalent forms of entanglement and not between any other properties, we must require that the parameters are functions that do not vary within any equivalence class. This implies that the parameters must be invariant under local reversible operations and invariant under multiplication of a state by any constant $c\in \mathbb{C}-\{0\}$.
We can construct such parameters by finding a set of functions $f_i$ on the Hilbert space that are invariant under local reversible operations with determinant 1 and homogeneous under multiplication of a state by $c\in \mathbb{C}-\{0\}$, i.e., functions $f_i$ invariant under local reversible operations with unit determinant and satisfying $f_i(c\psi_{AB})=c^{k(i)}f_i(\psi_{AB})$ where $k(i)$ is the degree of homogeneity of $f_i$. Then whenever $f_i\neq0$ and $f_j\neq0$ the ratio $f_i^{k(j)}/f_j^{k(i)}$ has degree of homogeneity zero. Such a ratio is thus both invariant under all local reversible operations and invariant under multiplication of a state by a constant $c\in \mathbb{C}-\{0\}$. Therefore this kind of functions provide coordinates parametrizing the set of equivalence classes $\tilde{\mathcal{O}}_{\psi_{AB}}$.
If we further require that the homogeneous functions $f_i$ are identically zero for all product states they can be used on their own as witnesses of entanglement. Any nonzero value of such a function implies that the state is entangled.
We refer to the homogeneous functions $f_i$ that are invariant under local reversible operations with determinant 1 and identically zero for all product states as {\it entanglement invariants}.

If for two entangled states there exists a ratio between two entanglement invariants with degree of homogeneity zero that takes different values for the two states these two states do not belong to the same equivalence class and are thus inequivalently entangled. However, if for two states all such ratios between entanglement invariants takes the same values, it is not necessarily the case that the two states belong to the same equivalence class.
Thus a given set of entanglement invariants may not be able to distinguish all inequivalent types of entanglement. If this is the case we say that the set only provides a {\it partial characterization} of the entanglement properties of the system.
In general it is not possible to find a set of entanglement invariants that distinguish all equivalence classes and the characterization provided by the invariants is only partial.

One way to construct entanglement invariants is as homogeneous polynomials in the state coefficients that are invariant under local reversible operations with determinant 1 and identically zero for all product states.
Characterizing entanglement using such polynomial invariants has previously been done for different systems of non-relativistic spin-$\frac{1}{2}$ particles (See e.g. References \cite{grassl,wootters,wootters2,kempe,sud,coffman,luque,verstraete2,miake,luq4,verstraete, toumazet}).

If we consider entanglement of Dirac spinors we may tentatively identify the local reversible operations acting on the spinors as the set of local unitary spinor evolutions generated by the set of allowed Dirac Hamiltonians together with the 
set of possible changes of local reference frames, i.e., the local spinor representations of the proper orthochronous Lorentz transformations. Then, for a system of two Dirac particles with definite momenta we have three conditions that tentatively define a spinor entanglement property for pure states of such particles.

\begin{enumerate}
  \item  [(1)] Non-existence for any state that can be created using only local resources, i.e., any product state.
  \item [(2)] Invariance under local evolutions generated by physically allowed Dirac Hamiltonians that act unitarily on any subspace defined by fixed momenta.
  \item[(3)]Invariance under changes of local inertial reference frames, i.e., Lorentz invariance.
\end{enumerate}

An alternative approach to describing entanglement focuses on entanglement properties that can be quantified. In this approach   a quantifiable entanglement property is required to be non-increasing on average under any local operations assisted by classical communication \cite{vidal}. This condition is called entanglement monotonicity \cite{vidal}. To evaluate this condition one needs to characterize the set of such local operations assisted by classical communication and introduce measures of entanglement \cite{entmes}, which goes beyond the scope of this work.

\subsection{Construction of invariants}\label{un}

Any entanglement invariant that can be used to describe spinor entanglement properties as defined in Section \ref{spen} of a system of two Dirac particles with definite momenta must take the value zero for any product state, but not for all states. Further it must be invariant under local evolutions with determinant 1 generated by Dirac Hamiltonians that act unitarily on any subspace defined by fixed momenta, i.e., any subspace spanned by spinorial degrees of freedom, and be Lorentz invariant.

We can construct quantities with these properties by utilizing the features of the bilinear forms described in Sect. \ref{invariants} ans Sect \ref{ham}.
The bilinear forms $\psi^T(x)C\gamma^5\varphi(x)$ and $\psi^T(x)C\varphi(x)$ are pointwise Lorentz invariant. For $\psi$ and $\varphi$ belonging to a subspace spanned by spinorial degrees of freedom they are invariant, up to a U(1) phase, under evolution that acts unitarily on the subspace and is generated by Dirac Hamiltonians and zero-mass Dirac Hamiltonians, respectively. Moreover, $\psi^T(x)C\psi(x)$ and $\psi^T(x)C\gamma^5\psi(x)$ are identically zero due to the antisymmetry of $C$ and $C\gamma^5$.

Now, consider two spacelike separated observers Alice and Bob each with their own laboratory containing a Dirac or Weyl particle. Let the two particles be in a joint state and assume that Alice's operations on the shared system can be made jointly with Bob's, i.e., assume that Alice's operations commute with Bob's. We assume that we can use a tensor product structure to describe the shared system and use the tensor products $\phi_{j_A}e^{i\bold{k_A}\cdot\bold{x_A}}\otimes \phi_{j_B}e^{i\bold{k_B}\cdot\bold{x_B}}$ of local basis elements as a basis. 
Then we can expand the state in this basis as
\begin{eqnarray}\label{bilbo}
\psi_{AB}(t)=\sum_{\bold{k_A},\bold{k_B}}\sum_{j_A,j_B}\psi_{j_A,j_B,\bold{k_A},\bold{k_B}}(t)\phi_{j_A}e^{i\bold{k_A}\cdot\bold{x_A}}\otimes \phi_{j_B}e^{i\bold{k_B}\cdot\bold{x_B}},
\end{eqnarray}
where $\psi_{j_A,j_B,\bold{k_A},\bold{k_B}}(t)$ are complex numbers. 

Next we assume that the state belongs to a subspace where $\bold{k_A}$ and $\bold{k_B}$ are fixed, i.e., a subspace spanned by the spinorial degrees of freedom.
Note that without this assumption local unitary evolution generated by Dirac Hamiltonians in general transforms a state where the spinorial degrees of freedom are entangled and not conditioned on the momenta, to a state where the spinorial degrees of freedom are conditioned on the momenta, even if it acts unitarily on every fixed momenta subspace. Then the spinor entanglement cannot be described independently of the momenta (See Appendix \ref{dwalin} for a discussion).
We suppress the indices $\bold{k_A},\bold{k_B}$ in the description of the state and let $\psi_{jk}\equiv \psi_{j_A,k_B,\bold{k_A},\bold{k_B}}$.
The coefficients $\psi_{jk}$ can be arranged as a matrix by letting $j$ be the row index and $k$ be the column index. Let us denote this matrix $\Psi_{AB}$. Written out it is 
\begin{eqnarray}\label{bofur}
\Psi_{AB}\equiv
\begin{pmatrix}
\psi_{00} & \psi_{01} & \psi_{02} & \psi_{03}\\
\psi_{10} & \psi_{11} & \psi_{12} & \psi_{13}\\
\psi_{20} & \psi_{21} & \psi_{22} & \psi_{23} \\
\psi_{30} & \psi_{31} & \psi_{32} & \psi_{33}\\
\end{pmatrix}.
\end{eqnarray}
Transformations $S_A$ on Alice's part of the system act from the left and transformations $S_B$ on Bob's part of the system  act in transposed form $S_{B}^T$ from the right

\begin{eqnarray}
\Psi_{AB}\to S_A\Psi_{AB}S_{B}^T.
\end{eqnarray}

Applying what we learned in Section \ref{invariants} we can now construct invariants under action of the spinor representation of the proper orthochronous Lorentz group in both Alice's lab and in Bob's lab.
A first simple invariant of degree 2 in the coefficients of the state is
\begin{eqnarray}
I_1=\frac{1}{2}\Tr[\Psi_{AB}^{T}C\Psi_{AB}C],
\end{eqnarray}
where the first $C$ is understood to be defined in the basis of Alice and the second in the basis of Bob.
The invariance of $I_1$ can be seen directly from its transformation properties under local spinor representations of proper orthochronous Lorentz transformations in Alice's lab $S_A(\Lambda_A)$ and Bob's lab $S_B(\Lambda_B)$

\begin{eqnarray}
&&\frac{1}{2} \Tr[\Psi_{AB}^{T}S_{A}^{T}(\Lambda_A)CS_A(\Lambda_A)\Psi_{AB}S_{B}^{T}(\Lambda_B)CS_B(\Lambda_B)]\nonumber\\
&=&\frac{1}{2} \Tr[\Psi_{AB}^{T}CS_{A}^{-1}(\Lambda_A)S_A(\Lambda_A)\Psi_{AB}CS_{B}^{-1} (\Lambda_B)S_B(\Lambda_B)]\nonumber\\&=&\frac{1}{2} \Tr[\Psi_{AB}^{T}C\Psi_{AB}C].
\end{eqnarray}
Moreover, $I_1$ is identically zero for product states. Written out in terms of coefficients the invariant is

\begin{eqnarray}
I_1=\psi_{00} \psi_{11}-\psi_{01} \psi_{10} +\psi_{02}\psi_{13} - \psi_{03} \psi_{12} \nonumber\\+\psi_{20}      \psi_{ 31} -\psi_{21}\psi_{30}+\psi_{22 }\psi_{33}  - 
 \psi_{23}\psi_{32}.
\end{eqnarray}
If we divide $\Psi_{AB}$ into four $2\times 2$ block matrices, two diagonal and two off-diagonal, we see that the invariant is the sum of the determinants of these blocks.

By construction $I_1$ is invariant under parity inversion in both Alice's and Bob's lab in addition to the proper orthochronous Lorentz group. Moreover, due to its construction it is invariant under unitary evolution generated by local zero-mass Dirac Hamiltonians, up to a U(1) phase, but not under evolution generated by nonzero-mass Dirac Hamiltonians.

We can construct three more invariants under the spinor representation of the proper orthochronous Lorentz group in a similar way. The Lorentz invariant

\begin{eqnarray}
I_{2}=\frac{1}{2} \Tr[\Psi_{AB}^{T}C\gamma^5\Psi_{AB}C\gamma^5],
\end{eqnarray}
is not invariant under P in either Alice's or Bob's lab. Furthermore, it is invariant under unitary evolution generated by local arbitrary mass Dirac Hamiltonians, up to a U(1) phase, in both labs.

The Lorentz invariant
\begin{eqnarray}
I_{2A}=\frac{1}{2} \Tr[\Psi_{AB}^{T}C\Psi_{AB}C\gamma^5],
\end{eqnarray}
is P invariant in Alice's lab but not in Bob's, and is only invariant under evolution generated by zero-mass Dirac Hamiltonians in Alice's lab, up to a U(1) phase, but invariant under unitary evolution generated by arbitrary mass Dirac Hamiltonians in Bob's lab, up to a U(1) phase.

Finally, the Lorentz invariant
\begin{eqnarray}
I_{2B}=\frac{1}{2} \Tr[\Psi_{AB}^{T}C\gamma^5\Psi_{AB}C],
\end{eqnarray}
is invariant under P in Bob's lab but not Alice's lab, and is only invariant under evolution generated by zero-mass Dirac Hamiltonians in Bob's lab, up to a U(1) phase, but invariant under unitary evolution generated by arbitrary mass Dirac Hamiltonians in Alice's lab, up to a U(1) phase.
Each of $I_{2}$, $I_{2A}$ and $I_{2B}$ is identically zero for product states.
Written out in terms of coefficients they are 

\begin{eqnarray}
I_{2}=\psi_{13} \psi_{20}-\psi_{10}\psi_{23} +\psi_{11}\psi_{22}-\psi_{12} \psi_{21}\nonumber\\ +\psi_{02}\psi_{31} -\psi_{01}\psi_{32} +\psi_{00}\psi_{33} -\psi_{03}\psi_{30},\\
I_{2A}=\psi_{00}\psi_{13}-\psi_{03}\psi_{10} +\psi_{02}\psi_{11} -\psi_{01}\psi_{12}\nonumber\\  +\psi_{22}\psi_{31} - \psi_{21}\psi_{32} +\psi_{20}\psi_{33} -\psi_{23}\psi_{30} ,\\
I_{2B}=\psi_{11}\psi_{20} -\psi_{10}\psi_{21} +\psi_{13}\psi_{22} -\psi_{12}\psi_{23}\nonumber\\ +\psi_{00}\psi_{31}-\psi_{01}\psi_{30}  +\psi_{02}\psi_{33}-\psi_{03}\psi_{32}.
\end{eqnarray}
Each of these is a combination of the determinants of four different $2\times 2$ minors of $\Psi_{AB}$.

If both Alice's and Bob's particles are Weyl particles, i.e., if the shared state is invariant under some combination of projections $P_L^A$ or $P_R^A$ by Alice and $P_L^B$ or $P_R^B$ by Bob, each of the Lorentz invariants $I_{1}$, $I_{2}$, $I_{2A}$ and $I_{2B}$ reduces to $4(\psi_{00} \psi_{11}-\psi_{01} \psi_{10})$ up to a sign. This is because the shared state has the symmetry $\psi_{jk}=(-1)^{|LA|}\psi_{(j-2) k}=(-1)^{|LB|}\psi_{j (k-2)}$, where $|LA|=1$ if the state is invariant under $P_L^A$ and zero otherwise, $|LB|=1$ if the state is invariant under $P_L^B$ and zero otherwise, and $j$,$k$ are defined modulo 4. The polynomial $\psi_{00} \psi_{11}-\psi_{01} \psi_{10}$ is the Wootters concurrence \cite{wootters,wootters2}. Thus, for Weyl particles the polynomials $I_{1}$, $I_{2}$, $I_{2A}$ and $I_{2B}$ become essentially equivalent to the Wootters concurrence.

We can construct Lorentz invariants of degree four that are P and CPT invariant in both Alice's and Bob's lab as $I_1^2,I_{2}^2$, $I_{2A}^2$ and $I_{2B}^2$.
The invariants $I_{1}I_{2A}$ and $I_{2}I_{2B}$ are CPT invariant in both labs but only P invariant in Alice's lab,
and $I_{1}I_{2B}$, $I_{2}I_{2A}$ are CPT invariant in both labs but only P invariant in Bob's lab.
Finally, the invariants $I_{1}I_{2}$ and $I_{2A}I_{2B}$ are CPT invariant in both labs but not P invariant in either lab.

A degree four Lorentz invariant can be obtained as

\begin{eqnarray}
I_3&=&-\frac{1}{4} \Tr[\Psi_{AB}^{T}C\Psi_{AB}C\Psi_{AB}^{T}C\Psi_{AB}C]+\frac{1}{2}I_1^2\nonumber\\
&=&-\frac{1}{4} \Tr[\Psi_{AB}^{T}C\gamma^5\Psi_{AB}C\gamma^5\Psi_{AB}^{T}C\gamma^5\Psi_{AB}C\gamma^5]+\frac{1}{2}I_2^2\nonumber\\
&=&\det[\Psi_{AB}].
\end{eqnarray}
This Lorentz invariant is P and CPT invariant in both labs, linearly independent of $I_1^2,I_{2}^2$, $I_{2A}^2$ and $I_{2B}^2$, and identically zero for product states. It is also
invariant under unitary evolution generated by arbitrary mass Dirac Hamiltonians in both labs, up to a U(1) phase. Moreover, it is invariant under any local SU(4) transformations by Alice and Bob, and even any local SL(4,$\mathbb{C}$) transformations $S_{A},S_{B}$ in Alice's or Bob's lab since $\det[S_{A}\Psi_{AB}S_{B}^{T}]=\det[S_{A}]\det[\Psi_{AB}]\det[S_{B}^{T}]=\det[\Psi_{AB}]$.
Written out in terms of coefficients it is
\begin{eqnarray}
I_3=&&(\psi_{01}\psi_{ 30}-\psi_{00}\psi_{ 31})(\psi_{ 13}\psi_{ 22} -\psi_{ 12}\psi_{ 23} ) \nonumber\\&& +(\psi_{01}\psi_{32}-\psi_{ 02}\psi_{ 31})(\psi_{10}\psi_{23} -\psi_{13}\psi_{20} ) \nonumber\\&&+(\psi_{01}\psi_{33}-\psi_{03}\psi_{ 31}) (\psi_{12}\psi_{20 }-\psi_{10}\psi_{22})\nonumber\\&&
  + (\psi_{00}\psi_{32}-\psi_{ 02}   \psi_{ 30})(\psi_{ 13}\psi_{ 21}  - \psi_{11}\psi_{ 23}  ) \nonumber\\&&  +(\psi_{00}\psi_{33 }-\psi_{03} \psi_{30})(\psi_{11}\psi_{22}-\psi_{12}\psi_{21 }) \nonumber\\&&
    +(\psi_{03}\psi_{32}-\psi_{ 02}\psi_{33 })(\psi_{11}\psi_{20} -\psi_{10}\psi_{21}).   
\end{eqnarray}

If both Alice's and Bob's particles are Weyl particles, i.e., if the shared state is invariant under some combination of projections $P_L^A$ or $P_R^A$ by Alice and $P_L^B$ or $P_R^B$ by Bob, the invariant $I_3$ is identically zero since for such states the rank of $\Psi_{AB}$ is at most 2.

Note that by using $C$ and $C\gamma^5$ to construct the Lorentz invariants we have essentially used the spinor representations of the T and CP transformations described in Section \ref{rep}. This follows the same general idea as the construction of the Wootters concurrence in non-relativistic quantum mechanics of using "state inversion" transformations to construct entanglement invariants \cite{wootters2,uhlmann,rungta}.

\subsection{Characterization of spinor entanglement using the invariants }

The five invariants $I_1$, $I_{2}$, $I_{2A}$, $I_{2B}$ and $I_{3}$ are candidate entanglement invariants for a system of two Dirac spinors, invariant under the local proper orthochronous Lorentz groups as well as invariant up to a U(1) phase under local unitary evolutions generated by either zero- or arbitrary mass Dirac Hamiltonians as described in Section \ref{un}.
They provide a partial characterization of the equivalence classes under these local reversible operations.
For a generic state all the five invariants are non-zero. This follows since the zero locus of any non-constant homogeneous polynomial in the state coefficients, i.e., any non-constant homogeneous polynomial on $\mathbb{C}^{16}$, is a lower dimensional subset of $\mathbb{C}^{16}$ (See e.g. \cite{hart} or \cite{wall} Ch. A.1.6.). Thus the ratios of these invariants with homogeneous degree zero can be used to partially characterize inequivalent types of entanglement on almost all of the Hilbert space.
The condition that any of the invariants is zero defines a lower dimensional subset of the Hilbert space with a narrowed range of entanglement properties. Setting additional invariants to zero produces progressively lower dimensional subsets with increasingly constrained range of entanglement properties.

The existence of further Lorentz invariants that are also invariant, up to a U(1) phase, under local unitary evolution generated by either zero- or arbitrary mass Dirac Hamiltonians and are algebraically independent of $I_1, I_{2}$, $I_{2A}, I_{2B}$ and $I_{3}$ has not be ruled out even though none have been found in this work. Thus it has not been ruled out that a more complete characterization of spinor entanglement properties using polynomial Lorentz invariants can be found.
However, any other potentially more complete set of polynomial Lorentz invariants still provides only a partial characterization of the spinor entanglement properties.
To see this we note that there exist spinor entangled states for which $I_1,I_{2}$, $I_{2A},I_{2B}$ and $I_{3}$ are all zero. Thus entangled states exist that cannot be distinguished from product states by any of the five invariants. This is true also for any other set of polynomial Lorentz invariants. Due to the properties of the spinor representation of the Lorentz group there are entangled states for which no homogeneous Lorentz invariant polynomial can be non-zero (See Appendix \ref{locus}).

When the allowed local unitary evolutions of a spinor are generated by Dirac Hamiltonians with nonzero mass the set of local unitary transformations of that spinor that can be implemented forms a dense subset of a Lie group $G^{C\gamma^5}_U$ of unitary transformations. The group $G^{C\gamma^5}_U$ consists of all unitary transformations that preserve the bilinear form $\psi^TC\gamma^5\varphi$ up to a U(1) phase and is isomorphic to $\mathrm{U(1)}\times\mathrm{Sp}(2)$ where $\mathrm{Sp}(2)$ is the compact symplectic group of $4\times 4$ matrices (See e.g. Ref. \cite{hall}  Ch. 1.2.8). Therefore any continuous function that is invariant, up to a U(1) phase, under local unitary evolutions of the spinor generated by Dirac Hamiltonians with nonzero mass is invariant, up to a U(1) phase, under local operations in $G^{C\gamma^5}_U$ on the spinor.
See Appendix \ref{lie} for a discussion.
The group $G^{C\gamma^5}$ of all linear transformations that preserve the bilinear form $\psi^TC\gamma^5\varphi$ up to a U(1) phase is isomorphic to $\mathrm{U(1)}\times\mathrm{Sp}(4,\mathbb{C})$ where $\mathrm{Sp}(4,\mathbb{C})$ is the symplectic group of $4\times 4$ matrices (See e.g. Ref. \cite{hall} Ch. 1.2.4). The group $G^{C\gamma^5}$ is the smallest connected matrix Lie group that contains $G^{C\gamma^5}_U$ and the spinor representation of the proper orthochronous Lorentz group as subgroups. Any continuous Lorentz invariant function that is also invariant, up to a U(1) phase, under local unitary evolutions generated by Dirac Hamiltonians with nonzero mass is invariant, up to a U(1) phase, under local operations in $G^{C\gamma^5}$. See Appendix \ref{lie} for a discussion.

Similarly, when the allowed local unitary evolutions of a spinor are generated by Dirac Hamiltonians with zero mass and a coupling to a Yukawa pseudoscalar boson the set of local unitary transformations that can be implemented forms a dense subset of a Lie group $G^{C}_U$ of unitary transformations. The group $G^{C}_U$ consists of all unitary transformations that preserve the bilinear form $\psi^TC\varphi$ up to a U(1) phase and is isomorphic to $\mathrm{U(1)}\times\mathrm{Sp}(2)$. Therefore any continuous function that is invariant, up to a U(1) phase, under local unitary evolutions of the spinor generated by Dirac Hamiltonians with zero mass and a coupling to a Yukawa pseudoscalar boson is invariant, up to a U(1) phase, under local operations in $G^{C}_U$ on the spinor.
The group $G^{C}$ of all linear transformations that preserve the bilinear form $\psi^TC\varphi$ up to a U(1) phase is isomorphic to $\mathrm{U(1)}\times\mathrm{Sp}(4,\mathbb{C})$. The group $G^{C}$ is the smallest connected matrix Lie group that contains $G^{C}_U$ and the spinor representation of the proper orthochronous Lorentz group as subgroups. Any continuous Lorentz invariant function that is also invariant, up to a U(1) phase, under local unitary evolutions generated by Dirac Hamiltonians with zero mass an a coupling to a Yukawa pseudoscalar boson is invariant, up to a U(1) phase, under local operations in $G^{C}$. See Appendix \ref{lie} for a discussion. 

If the allowed local unitary evolutions of a spinor are generated by Dirac Hamiltonians with zero mass and no additional couplings the set of local unitary transformations that can be implemented forms a dense subset of the group $G^{C}_U\cap G^{C\gamma^5}_U$ of unitary transformations that preserve both the bilinear form $\psi^TC\varphi$ and the bilinear form $\psi^TC\gamma^5\varphi$ up to a U(1) phase, which is isomorphic to $\mathrm{U(1)}\times\mathrm{SU}(2)\times\mathrm{SU}(2)$. Therefore any continuous function that is invariant, up to a U(1) phase, under local unitary evolutions of the spinor generated by Dirac Hamiltonians with zero mass and no additional couplings is invariant, up to a U(1) phase, under local operations in $G^{C}_U\cap G^{C\gamma^5}_U$ on the spinor.
The group $G^{C}\cap G^{C\gamma^5}$ of all linear transformations that preserve both the bilinear form $\psi^TC\varphi$ and the bilinear form $\psi^TC\gamma^5\varphi$ up to a U(1) phase is isomorphic to $\mathrm{U}(1)\times\mathrm{SL}(2,\mathbb{C})\times \mathrm{SL}(2,\mathbb{C})$. The group $G^{C}\cap G^{C\gamma^5}$ is the smallest connected matrix Lie group that contains $G^{C}_U\cap G^{C\gamma^5}_U$ and the spinor representation of the proper orthochronous Lorentz group as subgroups. Any continuous Lorentz invariant function that is also invariant, up to a U(1) phase, under local unitary evolutions generated by Dirac Hamiltonians with zero mass and no additional couplings is invariant, up to a U(1) phase, under local operations in $G^{C}\cap G^{C\gamma^5}$. See Appendix \ref{lie} for a discussion.

The Lorentz invariant $|I_1|$ is invariant under $G^{C}\otimes G^{C}$, while  $|I_2|$ is invariant under $G^{C\gamma^5}\otimes G^{C\gamma^5}$, $|I_{2A}|$ is invariant under $G^{C}\otimes G^{C\gamma^5}$ and $|I_{2B}|$ is invariant under $G^{C\gamma^5}\otimes G^{C}$. The Lorentz invariant $|I_3|$ is invariant under $\mathrm{U}(1)\times\mathrm{SL}(4,\mathbb{C})\otimes\mathrm{U}(1)\times \mathrm{SL}(4,\mathbb{C})$ and thus invariant under $G^{C}\otimes G^{C}$, $G^{C\gamma^5}\otimes G^{C\gamma^5}$, $G^{C}\otimes G^{C\gamma^5}$ and $G^{C\gamma^5}\otimes G^{C}$. A subset of the Lorentz invariants that are invariant up to a U(1) phase under a shared group can be use to partially characterize orbits of this shared group. For example the subset  $I_1$ and $I_3$ can be used to partially characterize orbits of $G^{C}\otimes G^{C}$ and the subset $I_1,I_{2}$, $I_{2A},I_{2B}$, $I_{3}$ can be used to partially characterize orbits of $G^{C}\cap G^{C\gamma^5}\otimes G^{C}\cap G^{C\gamma^5}$.
But a subset the Lorentz invariants that are invariant up to a U(1) phase for a shared group cannot distinguish between two orbits of reversible operations that are both contained in the same orbit of this shared group.

The absolute values of the Lorentz invariants $|I_1|$, $|I_{2}|$, $|I_{2A}|$, $|I_{2B}|$ and $|I_{3}|$ can be used to construct Lorentz invariants also for states that are incoherent mixtures, i.e., mixed states. Mixed states are represented by positive semi-definite Hermitian matrices that are the analogues of density matrices. Each absolute value of a Lorentz invariant can be extended to a Lorentz invariant on the set of such density matrices through a {\it convex roof extension} \cite{lima,wakker,uhlmannn} (See Appendix \ref{con} for a discussion). Any such convex roof extension is identically zero on the set of separable states, i.e., the set of states that are incoherent mixtures of product states.

We can consider a set of incoherent mixtures of states with the same definite momentum degrees of freedom.
In this case the convex roof extension of a Lorentz invariant is invariant under the same local unitary evolutions generated by Dirac Hamiltonians as the absolute value of the Lorentz invariant it is derived from. 
Since the Lorentz invariants $|I_1|,|I_{2}|$, $|I_{2A}|,|I_{2B}|$ and $|I_{3}|$ have only been defined for definite momenta the case of incoherent mixtures of states with different momenta cannot be treated this way.
In particular we can obtain incoherent mixtures from a pure state without definite momenta by partially tracing over the momentum degrees of freedom. Such reduced density matrices do not have well defined transformation properties under local unitary evolution generated by Dirac Hamiltonians and the convex roof extensions 
of $|I_1|,|I_{2}|$, $|I_{2A}|,|I_{2B}|$ and $|I_{3}|$ are in general not invariant (See Appendices \ref{dwalin} and \ref{con} for a discussion).

From the matrix $\Psi_{AB}$ one can construct the analogue of the one-party reduced density matrix for Alice's particle $ \Psi_{AB} \Psi_{AB}^\dagger$, and likewise the one-party reduced density matrix $\Psi_{AB}^T \Psi_{AB}^*$ for Bob's particle.  The rank and determinant of the these reduced density matrices are invariant under the spinor representation of the local proper orthochronous Lorentz groups, as well as under local unitary evolution generated by Dirac Hamiltonians. However, in general their eigenvalues are not invariant. Other one-party reduced matrices with eigenvalues that are invariant under the spinor representations of the local proper orthochronous Lorentz groups, as well as invariant, up to a U(1) phase, under local unitary evolution generated by either zero- or arbitrary mass Dirac Hamiltonians can be constructed. See Appendix \ref{gloin} for a discussion.

For a local unitary evolution generated by a time independent nonzero-mass Dirac Hamiltonian on Alice's side the absolute values of $I_1$ and $I_{2A}$ oscillate with a constant time average. For zero momentum and four-potential the angular frequency of oscillation is $2m_A$ where $m_A$ is the mass of Alice's particle. Likewise, for a local unitary evolution generated by a time independent nonzero-mass Dirac Hamiltonian on Bobs's side the absolute values of $I_1$ and $I_{2B}$ oscillate, and for zero momentum and four-potential the angular frequency of oscillation is $2m_B$ where $m_B$ is the mass of Bob's particle.
The angular frequencies $2m_A$ and $2m_B$ are the same as those of the Zitterbewegung \cite{breit} of the respective Dirac particle. See Appendix \ref{ozz} for details.

In the context of Dirac particles in 2D and 3D Dirac semimetals we can see from the discussion in Section \ref{ham} that 
the absolute values of  $I_1,I_{2}$, $I_{2A},I_{2B}$ and $I_{3}$ are invariant under local unitary evolution generated by  Hamiltonians of the form given in Eq. (\ref{2d}) for Dirac particles in 2D Dirac semimetals \cite{kotov,vass}, and  Hamiltonians on the form given in Eq. (\ref{3d}) for Dirac particles in 3D Dirac semimetals \cite{bohm}. Moreover, we can add 
Semenoff mass terms $M_S\gamma^0\gamma^3$ \cite{semenoff} and Haldane mass terms $M_{H}\gamma^5\gamma^0\gamma^3$ \cite{haldane} and still maintain the invariance of the absolute values of  $I_1,I_{2}$, $I_{2A},I_{2B}$ and $I_{3}$.

\section{Examples of spinor entangled states}\label{exx}

Here we consider a few examples of entangled states to illustrate how the quantities $I_1,I_{2}$, $I_{2A},I_{2B}$ and $I_{3}$ distinguish them.
Reference \cite{pachos} studied the generation of entanglement between the spinorial degrees of freedom of two Dirac particles. In particular it considered the so called spinor "EPR-state" $1/\sqrt{2}({\phi_0^A}\otimes{\phi_1^B}-i{\phi_1^A}\otimes{\phi_0^B})$. For this state only $I_1$ is non-zero and it attains the absolute value $1/2$ which is its maximal absolute value for normalized states. Since $I_1$ is Lorentz invariant it attains the same absolute value for all states related to this state by spinor representations of local Lorentz transformations. Assuming zero mass for both particles it attains the absolute value $1/2$ also on all states related to the spinor EPR-state by unitary evolution generated locally by zero-mass Dirac Hamiltonians. The same state or equivalent states were considered also in \cite{alsing,mano,moradi,geng}.

In a similar way we can construct a state $1/\sqrt{2}({\phi_1^A}\otimes{\phi_3^B}-{\phi_2^A}\otimes{\phi_0^B})$ for which only $I_2$ is non-zero and attains the absolute value $1/2$, its maximal absolute value for normalized states. The state $1/\sqrt{2}({\phi_0^A}\otimes{\phi_0^B}-{\phi_1^A}\otimes{\phi_3^B})$ is such that only $I_{2A}$ is non-zero and attains the absolute value $1/2$, its maximal absolute value for normalized states. The state $1/\sqrt{2}({\phi_1^A}\otimes{\phi_1^B}-{\phi_2^A}\otimes{\phi_0^B})$ is such that only $I_{2B}$ is non-zero and attains the absolute value $1/2$, its maximal absolute value for normalized states. An example of a state such that $I_1=I_{2}=I_{2A}=I_{2B}=0$ but $I_{3}$ attains the absolute value $1/16$, its maximal absolute value for normalized states, is $1/2({\phi_0^A}\otimes{\phi_1^B}+{\phi_3^A}\otimes{\phi_0^B}+i{\phi_2^A}\otimes{\phi_3^B}+i{\phi_1^A}\otimes{\phi_2^B})$.

We may also consider chiral versions of the EPR state constructed from spinors with definite chirality. For example, with right-handed chirality of both Alice'a and Bob's spinors a chiral EPR state is
$1/\sqrt{8}[({\phi_1^A}+{\phi_3^A})\otimes({\phi_0^B}+{\phi_2^B})-i({\phi_0^A}+{\phi_2^A})\otimes({\phi_1^B}+{\phi_3^B})]$. For this state the Lorentz invariants $I_1$, $I_2$, $I_{2A}$ and $I_{2B}$ all attain their maximum possible absolute values over the normalized states while $I_3=0$.

Entanglement as the result of a decay process of a particle with zero spin into a particle-antiparticle pair was considered in Ref. \cite{caban}. An example of a state that can be given an interpretation as the result of such a decay process is $1/\sqrt{2}({\phi_0^A}\otimes{\phi_3^B}-i{\phi_1^A}\otimes{\phi_2^B})$.
For this state only $I_1$ is non-zero and attains the absolute value $1/2$, its maximal absolute value for normalized states. 
Another state that can similarly be given such an interpretation but which is symmetric with respect to Alice and Bob is $1/2({\phi_0^A}\otimes{\phi_3^B}+{\phi_3^A}\otimes{\phi_0^B}-{\phi_1^A}\otimes{\phi_2^B}-{\phi_2^A}\otimes{\phi_1^B})$. For this state $I_{2A}=I_{2B}=0$ but $I_1,I_{2}$ and $I_{3}$ attain their respective maximal absolute values over the normalized states.

\section{The Foldy-Wouthuysen representation}\label{wout}

The Pauli equation was originally introduced in Ref. \cite{pauli2} to describe a non-relativistic spin-$\frac{1}{2}$ particle. For a particle with mass $m$ and charge $q$ in an electromagnetic four-potential $A_{\mu}(x)$ it can be written, in natural units $\hbar=c=1$, as
\begin{eqnarray}\label{p}
i\partial_0\chi&=&\Bigg[ qA_0 I +m I+\frac{1}{2m}\sum_{\mu=1,2,3}(i\partial_\mu-qA_\mu)^2I\nonumber\\
 &&-\frac{iq}{2m}\sum_{\substack{\mu,\nu=1,2,3\\\mu\neq\nu}}\sigma^\mu\sigma^\nu\partial_{\mu}A_\nu\Bigg]\chi,
\end{eqnarray}
where $\chi$ is a two component spinor.
We identify the Pauli Hamiltonian as
\begin{eqnarray}
H_{P}&=& qA_0 I +m I+\frac{1}{2m}\sum_{\mu=1,2,3}(i\partial_\mu-qA_\mu)^2I\nonumber\\
 &&-\frac{iq}{2m}\sum_{\substack{\mu,\nu=1,2,3\\\mu\neq\nu}}\sigma^\mu\sigma^\nu\partial_{\mu}A_\nu.
\end{eqnarray}
For two spacelike separated Pauli spinors undergoing evolution generated locally by Pauli Hamiltonians and acting unitarily on a subspace spanned by the spin degrees of freedom the invariant of the evolution describing the spin entanglement is the Wootters concurrence \cite{wootters,wootters2}.

Since the Pauli equation is used to describe non-relativistic spin-$\frac{1}{2}$ particles it is expected that it is related to an approximation of the Dirac equation valid for low momenta and weak fields. Moreover, since the Pauli Hamiltonian acts only on a two component spinor it would be required for this case that a transformation can take the Dirac Hamiltonian to a block diagonal form with two $2\times2$ blocks, to a good approximation.
A method to recover the Pauli Hamiltonian as an approximation to the Dirac Hamiltonian in the low momentum weak field limit is the Foldy-Wouthuysen transformations \cite{foldy}. This procedure creates from the Dirac Hamiltonian through a series of canonical transformations a Hamiltonian that is a series expansion in $1/m$. Each canonical transformation eliminates the Hamiltonian terms that are not on the block diagonal form to one higher order in $1/m$. The generators of such canonical transformations are in general functions of the momentum and the four-potential and its derivatives. In each step the state and the Hamiltonian transform according to

\begin{eqnarray}\label{dc}
\psi_{(n)}&=&e^{iS_{n}}\psi_{(n-1)},\nonumber\\
H_{(n)}&=&e^{iS_{n}}H_{(n-1)}e^{-iS_{n}}-ie^{iS_{n}}\partial_t e^{-iS_{n}},
\end{eqnarray}
where $S_{n}$ is the generator of the $n$th transformation and $\psi_{(n)}$ and $H_{(n)}$ are the resulting spinor and Hamiltonian, respectively.
The first two canonical transformations in the Foldy-Wouthuysen procedure are generated by 
\begin{eqnarray}\label{gen}
S_1&=&\frac{i}{2m}\sum_{\mu=1,2,3}\gamma^\mu(i\partial_\mu-qA_\mu),\nonumber\\
S_2&=&\frac{q}{4m^2}\gamma^0\sum_{\mu=1,2,3}\gamma^\mu(\partial_\mu A_0+\partial_0A_\mu).
\end{eqnarray}
These two transformations result in the Hamiltonian
\begin{eqnarray}\label{fw}
H_{FW(2)}&=& qA_0 I +m\gamma^0
+\frac{1}{2m}\gamma^0\sum_{\mu=1,2,3}(i\partial_\mu-qA_\mu)^2\nonumber\\
&-&\frac{iq}{2m}\gamma^0\sum_{\substack{\mu,\nu=1,2,3\\\mu\neq\nu}}\gamma^\mu\gamma^\nu\partial_{\nu}A_\mu+\mathcal{O}\left(\frac{1}{m^2}\right).
\end{eqnarray}
To first order in $1/m$ the Hamiltonian $H_{FW(2)}$ can be decoupled into two Pauli Hamiltonians acting on the upper two and lower two components of the spinor, respectively (See Ref. \cite{foldy} for details). The third transformation adds the spin-orbit coupling and the Darwin term and subsequent transformations add additional relativistic corrections to the two Pauli Hamiltonians.

Now we consider the case of a fixed momentum subspace, i.e., a subspace spanned by spinorial degrees of freedom, and how the bilinear forms $\psi^TC\varphi$ and $\psi^TC\gamma^5\varphi$ on such a space are represented in the Foldy-Wouthuysen picture. From Eq. (\ref{gen}) together with Eq. (\ref{c5}) we can see that $C\gamma^5 S_1=-S_1^TC\gamma^5$ and $C\gamma^5 S_2=-S_2^TC\gamma^5$.  From Eq. (\ref{gen}) together with Eq. (\ref{c}) we can see that $C S_1=S_1^TC$ and $C S_2=-S_2^TC$.
Therefore, $\psi^TC\gamma^5\varphi=\psi^T_{(1)}C\gamma^5\varphi_{(1)}=\psi^T_{(2)}C\gamma^5\varphi_{(2)}$, but $\psi^TC\varphi=\psi^T_{(1)}Ce^{-2iS_1}\varphi_{(1)}=\psi^T_{(2)}Ce^{iS_{2}} e^{-2iS_1}e^{-iS_2}\varphi_{(2)}$.

Next we define $\Psi_{AB}^{(n)}=e^{iS^A_n}\dots e^{iS^A_1}\Psi_{AB}(e^{iS^{B}_1})^T\dots (e^{iS^{B}_n})^T$ where $S^A_i$ is the generator of the $i$th transformation on Alice's side and $S^B_i$  is the generator of the $i$th transformation on Bob's side.
The representation of $I_1$ in the second step Foldy-Wouthuysen picture is in general a function of $\Psi_{AB}^{(2)}$, the momenta, the four-potentials and their derivatives 

\begin{eqnarray}
I_1=\frac{1}{2}\Tr[\Psi_{AB}^{(2)T}Ce^{iS^A_{2}}e^{-2iS^A_1}e^{-iS^A_2}\Psi_{AB}^{(2)}C e^{iS^B_{2}}e^{-2iS^B_1}e^{-iS^B_2}].
\end{eqnarray}
The representation of $I_2$ on the other hand is still a polynomial in the state coefficients
\begin{eqnarray}
I_{2}=\frac{1}{2} \Tr[\Psi_{AB}^{(2)T}C\gamma^5\Psi_{AB}^{(2)}C\gamma^5].
\end{eqnarray}
Like $I_1$ the representations of $I_{2A}$ and $I_{2B}$ are functions of the momenta, the four-potentials and their derivatives
\begin{eqnarray}
I_{2A}=\frac{1}{2} \Tr[\Psi_{AB}^{(2)T}Ce^{iS^A_{2}} e^{-2iS^A_1}e^{-iS^A_2}\Psi_{AB}^{(2)}C\gamma^5],
\end{eqnarray}
\begin{eqnarray}
I_{2B}=\frac{1}{2} \Tr[\Psi_{AB}^{(2)T}C\gamma^5\Psi_{AB}^{(2)}Ce^{iS^B_{2}} e^{-2iS^B_1}e^{-iS^B_2}].
\end{eqnarray}
However, $I_3$ is still a polynomial in the state coefficients
\begin{eqnarray}
I_3=\det[\Psi_{AB}^{(2)}],
\end{eqnarray}
since it is a determinant and the Foldy-Wouthuysen transformations are unitary matrices with the first two being determinant one.

\section{Discussion and Conclusions}\label{diss}
We have considered the problem of describing the spinor entanglement of two Dirac particles with definite momenta held by spacelike separated observers Alice and Bob. We reviewed some properties of the Dirac equation, the spinor representation of the Lorentz group and the charge conjugation, and discussed properties of Lorentz invariant bilinear forms. 
The assumption was made that we can neglect curvature and describe both Alice and Bob as being in a Minkowski space. Further, we assumed that it is in some way physically motivated to use a description where particle momentum eigenmodes have a finite spatial extent. 
Lastly, we assumed that the tensor products of the single particle momentum eigenmodes is a basis for the two-particle states.

Given these assumptions and using the properties of the Lorentz invariant bilinear forms we have constructed five polynomials $I_1,I_2,I_{2A},I_{2B},$ and $I_3$ in the state coefficients of the two spacelike separated Dirac particles, that are invariant under the spinor representations of the local proper orthochronous Lorentz groups. The invariants, $I_1,I_2,I_{2A},I_{2B}$, are of degree 2 and $I_3$ is of degree 4. Each of these Lorentz invariants is identically zero for all product states but does not take the value zero for all other states. 

The four Lorentz invariants of degree 2 can each be expressed as a sum of four determinants of $2\times 2$ matrices of state coefficients. For the case of two particles with definite chirality, i.e., Weyl particles, these invariants reduce to a single determinant, the Wootters concurrence \cite{wootters,wootters2}.

We considered evolutions that are generated by local Hamiltonians and act unitarily on subspaces with fixed momenta, i.e., subspaces spanned by the spinorial degrees of freedom.
The polynomial $I_2$ is invariant, up to a U(1) phase, under such local unitary evolution generated locally by Dirac Hamiltonians. The polynomial $I_1$ is invariant, up to a U(1) phase, only for zero-mass Dirac Hamiltonians.
The polynomial $I_{2A}$ is invariant, up to a U(1) phase, for arbitrary-mass Dirac Hamiltonians acting on Bobs side but only for zero-mass Dirac Hamiltonians acting on Alice's side.
Similarly, $I_{2B}$ is invariant, up to a U(1) phase, for arbitrary-mass Dirac Hamiltonians acting on Alice's side but only for zero-mass Dirac Hamiltonians acting on Bob's side.
The fifth Lorentz invariant, $I_3$, is invariant, up to a U(1) phase, under any local unitary evolution, physically allowed or not.

For a system of two Dirac particles with definite momenta the
conditions of non-existence for product states, invariance under local evolutions generated by physically allowed Dirac Hamiltonians that act unitarily on any subspace defined by fixed particle momenta, and Lorentz invariance were tentatively considered as the conditions defining a spinor entanglement property. With this definition two spinor entangled states that can be transformed into each other by physically allowed local unitary evolution and changes of reference frame have the same entanglement properties.
The polynomials $I_1,I_2,I_{2A},I_{2B},$ and $I_3$ can be used to partially characterize such spinor entanglement properties, i.e., partially characterize the qualitatively different ways that two Dirac spinors can be entangled.

For the case of incoherent mixtures of spinor entangled states the Lorentz invariants $|I_1|,|I_2|,|I_{2A}|,|I_{2B}|,$ and $|I_3|$ can be extended to Lorentz invariant functions on the set of such incoherent mixtures through convex roof extensions \cite{lima,wakker,uhlmannn}.
Thus these convex roof extensions provide a partial characterization of the qualitatively different types of spinor entanglement of incoherent mixtures. All such convex roof extensions are by definition identically zero for the separable states, i.e., for the incoherent mixtures of product states.

The constructed polynomials were considered also in the context of Dirac and Weyl quasiparticles in condensed matter and molecular systems. In particular, they are all invariant, up to a U(1) phase, for local evolution generated by the Hamiltonians describing a Dirac particle in the 2D Dirac semimetal graphene \cite{kotov,vass} and the Hamiltonians describing a particle in 3D Dirac semimetals \cite{bohm}. This holds also if Semenoff \cite{semenoff} or Haldane \cite{haldane} mass terms are added to the Hamiltonians.

We considered examples of spinor entangled states characterized by the Lorentz invariants. For each invariant there exist states for which only that invariant is nonzero. 
Only $I_1$ is non-zero for the so called spinor "EPR-state" previously discussed in the literature \cite{pachos,alsing,mano,moradi,geng}.

If the local evolution on Alice's side is generated by a time independent nonzero-mass Dirac Hamiltonian the absolute values of $I_1$ and $I_{2A}$ oscillate with constant time average. The frequency of oscillation is the same as that of the Zitterbewegung \cite{breit} of Alice's Dirac particle.
Likewise, for a local evolution on Bobs's side generated by a time independent nonzero-mass Dirac Hamiltonian the absolute values of $I_1$ and $I_{2B}$ oscillate with the Zitterbewegung frequency of Bob's Dirac particle.

Finally, we considered the Foldy-Wouthuysen representations of the polynomial invariants. It was found that in general the invariants $I_1,I_{2A}$, and $I_{2B}$ become functions of the momenta and four-potentials and their derivatives after the two first Foldy-Wouthuysen transformations. Only $I_2$ and $I_3$ are still purely polynomials in state coefficients. 

Several previous works \cite{czachor,caban,caban3,moradi,caban2,terno2,ahn,tera,tera2,won} have investigated entanglement of two Dirac particles by studying non-local correlations  and in particular the violation of Bell-inequalities \cite{bell,chsh}.
This approach requires the identification of appropriate Bell inequalities together with choices of measurement operators that allow non-local correlations to be observed. In contrast the approach in this work constructs algebraic quantities from the state coefficients and does not require identification of measurement operators, but also does not directly demonstrate the potential for non-local correlations.

Many previous works have investigated entanglement of two Dirac particles in the context of a Quantum Field Theory formalism \cite{czachor,alsing,pachos,mano,caban,caban3,leon,tessier,geng,caban2,terno2,ahn,tera,tera2}
including non-local correlations between Dirac particles \cite{czachor,caban,caban3,caban2,terno2,ahn,tera,tera2}, generation of entanglement \cite{pachos,mano,leon}, and entanglement in non-inertial frames \cite{tessier}.
In a Quantum Field Theory formalism the Dirac spinor is reinterpreted as an operator valued Dirac field acting on a Hilbert space. In the context of such a formalism the Lorentz invariants constructed in this work can be reinterpreted as Lorentz invariants of the Dirac field since this field still transforms under the spinor representation of the Lorentz group.
 The Hilbert space in a Quantum Field Theory formalism on the other hand is typically constructed to transform under an infinite dimensional representation of the Lorentz group. The Hilbert space basis vectors are labelled by momentum and a spin degree of freedom chosen so that the infinite dimensional representation of the Lorentz group acts on the spin degree of freedom conditioned on the particle momentum. Since this spin degree of freedom transforms conditioned on the momentum it does not have a complete physical interpretation independently of the momentum. 
 
One consequence of the spin degree of freedom in a Quantum Field Theory formalism being defined in a momentum dependent way is that taking the partial trace over the momentum in general leads to reduced spin density matrices that are not physically meaningful and do not transform under any representation of the Lorentz group \cite{terno}. Only in the case of definite particle momentum can the reduced spin density matrix be given a physical meaning and has well defined transformation properties. For the same reason spin entanglement between two particles in this kind of formalism can in general not be meaningfully described independently of the particle momenta. If one nevertheless constructs the reduced two particle spin density matrix one finds that the
mathematical counterpart of spin entanglement is in general dependent on the inertial frame and does not have well defined transformation properties \cite{adami}. See Appendix \ref{kili} for a discussion of the infinite dimensional representation of the Lorentz group.
In contrast the action of the spinor representation of the Lorentz group is not conditioned on the momentum and therefore physically meaningful reduced density matrices for Dirac spinors can be constructed by taking the partial trace over the momenta. These transform under the spinor representations of the local proper orthochronous Lorentz groups. Moreover, the Lorentz invariants $|I_1|,|I_2|,|I_{2A}|,|I_{2B}|,$ and $|I_3|$ can be extended to Lorentz invariant functions on the set of such reduced two-spinor density matrices through convex roof extensions \cite{lima,wakker,uhlmannn} (See Appendix \ref{con} for a discussion).

The five Lorentz invariants in this work were constructed to describe spinor entanglement for the case of definite particle momenta but whether similar constructions can be made for the case of Dirac particles without definite momenta is an open question.
Another open question is that of describing quantifiable spinor entanglement properties, i.e., spinor entanglement properties that satisfy a condition of non-increase on average under any local operations assisted by classical communication \cite{vidal}.

\begin{acknowledgments}
The author thanks Antonio Ac{\'i}n for comments and discussion and for suggestions that encouraged the addition of Appendix G. The author also thanks the anonymous referees for comments that prompted several additions including Appendices B and D-H.  Support
from the European Research Council Consolidator Grant QITBOX (Grant Agreement No. 617337), the Spanish MINECO
(Project FOQUS FIS2013-46768-P, Severo Ochoa grant SEV-
2015-0522), Fundaci\'o Privada Cellex, the Generalitat de Catalunya
(SGR 875) and the John Templeton Foundation is acknowledged.

\end{acknowledgments}

\appendix

\section{Real and rational numbers as quantifiers in experiments}\label{opp}
A measurement quantifying a property is an operational procedure that terminates and results in an output number that is registered by the experimenter. Only finitely many such procedures can be performed in any given experiment and the registry holding the output numbers has finite capacity.

A real number is defined as the limit of a Cauchy sequence of rational numbers \cite{cantor}. If the sequence terminates the limit is a rational number itself. If the limit is not a rational number the sequence does not terminate.
Formulated differently, an irrational number is represented in a base-$n$ positional numeral system  as a non-terminating and non-recurring sequence of digits for any $n$.

Therefore, no finite capacity registry can hold an irrational number, and thus the output numbers from any experiment is necessarily a finite set of rational numbers. The elements of a finite set of rational numbers are multiples of their greatest common divisor $q$. 
Thus, the experiment cannot distinguish between a continuous spectrum $\mathbb{R}$ and the discrete spectrum $nq$, $n\in \mathbb{Z}$.
Hence, for any experimental setup it is possible to use a model where the measurable quantities take only a discrete set of values.

\section{The case without definite particle momenta}\label{dwalin}

The assumption of fixed particle momenta was made so that spinor entanglement could be studied without involving momentum degrees of freedom. However, this assumption is not physically trivial and as described in Sect. \ref{dir} it can be made in some models, such as those using box quantization, but in others not. In a model using a rigged Hilbert space description it can only be done as an approximation. Moreover, even if a given model allows for fixed momenta one may wish to consider physical scenarios without this restriction.

Therefore we consider the qualitative description of spinor entanglement for the case where the momentum is not fixed. A general initial state $\psi_{AB}$ with multiple momentum components can be written as

\begin{eqnarray}
\psi_{AB}=\sum_{\bold{k_A},\bold{k_B}}\sum_{j_A,j_B}\psi_{j_A,j_B,\bold{k_A},\bold{k_B}}\phi_{j_A}e^{i\bold{k_A}\cdot\bold{x_A}}\otimes \phi_{j_B}e^{i\bold{k_B}\cdot\bold{x_B}}.
\end{eqnarray}
Since the Dirac Hamiltonian $H_D$ has a term $-i\sum_{\mu=1,2,3}\gamma^0\gamma^\mu \partial_{\mu}$ that contains derivatives with respect to the spatial coordinate it is clear that the evolution of Alice's spinor is conditioned on the initial particle momentum $\bold{k_A}$.
Therefore the evolution of the spinorial degrees of freedom is described by a set of unitaries $U(\bold{k_A},t)$ indexed by $\bold{k_A}$ and acting on the terms of the state with the corresponding momentum, where in general $U(\bold{k_A},t)\neq U(\bold{k_A}',t)$ if $\bold{k_A}\neq \bold{k_A}'$. Likewise, the evolution of Bob's particle is described by a set of unitaries $U(\bold{k_B},t)$ indexed by $\bold{k_B}$ where in general $U(\bold{k_B},t)\neq U(\bold{k_B}',t)$ if $\bold{k_B}\neq \bold{k_B}'$.

Let us for simplicity consider an evolution that preserves the subspaces defined by fixed momentum.
A state initially without entanglement in the momentum degrees of freedom but without definite particle momenta such as

\begin{eqnarray}
\psi_{AB}=\frac{1}{\sqrt{8}}[&&\phi_{1}(e^{i\bold{k_A^1}\cdot\bold{x_A}}+e^{i\bold{k_A^2}\cdot\bold{x_A}})\otimes \phi_{1}(e^{i\bold{k_B^1}\cdot\bold{x_B}}+e^{i\bold{k_B^2}\cdot\bold{x_B}})\nonumber\\
&&+\phi_{0}(e^{i\bold{k_A^1}\cdot\bold{x_A}}+e^{i\bold{k_A^2}\cdot\bold{x_A}})\otimes \phi_{0}(e^{i\bold{k_B^1}\cdot\bold{x_B}}+e^{i\bold{k_B^2}\cdot\bold{x_B}})],\nonumber\\
\end{eqnarray}
is evolved conditioned on the initial momenta to a state
\begin{eqnarray}
\psi_{AB}(t)=\frac{1}{\sqrt{8}}[&&(U(\bold{k_A^1},t)\phi_{1}e^{i\bold{k_A^1}\cdot\bold{x_A}}+U(\bold{k_A^2},t)\phi_{1}e^{i\bold{k_A^2}\cdot\bold{x_A}})\nonumber\\
&&\otimes (U(\bold{k_B^1},t)\phi_{1}e^{i\bold{k_B^1}\cdot\bold{x_B}}+U(\bold{k_B^2},t)\phi_{1}e^{i\bold{k_B^2}\cdot\bold{x_B}})\nonumber\\
&&+(U(\bold{k_A^1},t)\phi_{0}e^{i\bold{k_A^1}\cdot\bold{x_A}}+U(\bold{k_A^2},t)\phi_{0}e^{i\bold{k_A^2}\cdot\bold{x_A}})\nonumber\\&&\otimes (U(\bold{k_B^1},t)\phi_{0}e^{i\bold{k_B^1}\cdot\bold{x_B}}+U(\bold{k_B^2},t)\phi_{0}e^{i\bold{k_B^2}\cdot\bold{x_B}})],\nonumber\\
\end{eqnarray}
where the spinor states are in general conditioned on the momenta. Thus, in the case without definite particle momenta the spinor entanglement cannot be described on its own without involving the momentum degrees of freedom.

\section{The constant momenta and four-potentials limit of $I_1, I_{2A}$, and $I_{2B}$}\label{ozz}
The absolute value of the Lorentz invariant $I_1$ is not invariant under unitary evolution generated by local nonzero-mass Dirac Hamiltonians in Alice's lab or in Bob's lab. The absolute values of the Lorentz invariants $I_{2A}$ and $I_{2B}$ are only invariant under unitary evolution generated by local nonzero-mass Dirac Hamiltonians in Bobs's and in Alice's lab, respectively. Moreover, $I_1$ is the only Lorentz invariant found in this work that is non-zero for the spinor "EPR-state" considered in Refs. \cite{pachos,alsing,mano,geng,moradi}.

We therefore study the behaviour of $I_1$, $I_{2A}$ and $I_{2B}$ when the evolution is generated in both labs by nonzero-mass Dirac Hamiltonians. In particular we consider their time dependencies relative to fixed inertial frames of Alice and Bob, i.e., relative to Alice's time $t_A$ and Bob's time $t_B$.

If all momenta and four-potentials are zero the evolution is generated by the Hamiltonians $H_A=m_A\gamma^0$ and $H_B=m_B\gamma^0$ in Alice's and Bob's labs respectively. It the state at $t_A=0$ and $t_B=0$ is $\Psi_{AB}^{0}$ the time dependencies of the Lorentz invariant $I_1$ given by

\begin{eqnarray}\label{osc}
I_1(t_A,t_B)&=&\frac{1}{2}\Tr[\Psi_{AB}^{T}C\Psi_{AB}C]\nonumber\\&=&\frac{1}{2}\Tr[\Psi_{AB}^{0T}Ce^{-2im_A\gamma^0t_A}\Psi_{AB}^{0}e^{-2im_B\gamma^0t_B}C]\nonumber\\&=&e^{-2i(m_At_A+m_Bt_B)}(\psi^0_{00} \psi^0_{11}-\psi^0_{01} \psi^0_{10})\nonumber\\&+& e^{-2i(m_At_A-m_Bt_B)}(  \psi^0_{02}\psi^0_{13}- \psi^0_{03} \psi^0_{12})\nonumber\\&+& e^{2i(m_At_A-m_Bt_B)}( \psi^0_{20}      \psi^0_{ 31}-\psi^0_{21}\psi^0_{30})\nonumber\\&+&e^{2i(m_At_A+m_Bt_B)}(\psi^0_{22 }\psi^0_{33} - 
 \psi^0_{23}\psi^0_{32}).
\end{eqnarray}
Thus, in the fixed inertial frames of Alice and Bob the value of $I_1$ is periodic in $t_A$ with period $\pi/m_{A}$ and periodic in $t_B$ with period $\pi/m_{B}$. So while the absolute value of $I_1$ is not invariant its $t_A$ average and $t_B$ average are invariant in this limit. Note in particular that for each of the constituent determinants, e.g. $e^{-2i(m_At_A+m_Bt_B)}(\psi^0_{00} \psi^0_{11}-\psi^0_{01} \psi^0_{10})$, the absolute value is constant in $t_A$ and $t_B$. Thus, in this limit the absolute value of the Wootters concurrence of the respective spinor components is invariant.
Note further that a nonzero constant scalar potential $A_0$ does not affect the absolute value of $I_1$ or the constituent concurrences since it only generates a U(1) phase.

Similarly, for $I_{2A}$ we have
\begin{eqnarray}
I_{2A}(t_A)&=&\frac{1}{2}\Tr[\Psi_{AB}^{T}C\Psi_{AB}C\gamma^5]\nonumber\\&=&\frac{1}{2}\Tr[\Psi_{AB}^{0T}e^{-im_A\gamma^0t_A}Ce^{-im_A\gamma^0t_A}\Psi_{AB}^{0}C\gamma^5]\nonumber\\&=&e^{-2im_At_A}(\psi^0_{00} \psi^0_{13}-\psi^0_{03} \psi^0_{10})\nonumber\\&+& e^{-i2m_At_A}(  \psi^0_{02}\psi^0_{11}- \psi^0_{01} \psi^0_{12})\nonumber\\&+& e^{2im_At_A}( \psi^0_{22}\psi^0_{ 31}-\psi^0_{21}\psi^0_{32})\nonumber\\&+&e^{2im_At_A}(\psi^0_{20 }\psi^0_{33} - 
 \psi^0_{23}\psi^0_{30}),
\end{eqnarray}
which is periodic in $t_A$ with a period $\pi/m_A$, and for $I_{2B}$ we have
\begin{eqnarray}
I_{2B}(t_B)&=&\frac{1}{2}\Tr[\Psi_{AB}^{T}C\gamma^5\Psi_{AB}C]\nonumber\\&=&\frac{1}{2}\Tr[\Psi_{AB}^{0T}C\gamma^5\Psi_{AB}^{0}e^{-im_B\gamma^0t_B}Ce^{-im_B\gamma^0t_B}]\nonumber\\&=&e^{-2im_Bt_B}(\psi^0_{11} \psi^0_{20}-\psi^0_{10} \psi^0_{21})\nonumber\\&+& e^{-2im_Bt_B}(  \psi^0_{00}\psi^0_{31}- \psi^0_{01} \psi^0_{30})\nonumber\\&+& e^{2im_Bt_B}( \psi^0_{13}\psi^0_{ 22}-\psi^0_{12}\psi^0_{23})\nonumber\\&+&e^{2im_Bt_B}(\psi^0_{02 }\psi^0_{33} - 
 \psi^0_{03}\psi^0_{32}),
\end{eqnarray}
which is periodic in $t_B$ with a period $\pi/m_B$.

The angular frequencies of the values of the invariants, $2m_A$ or $2m_B$, are the same as that of the Zitterbewegung \cite{breit} of the respective Dirac particle. For an electron or positron this frequency is approximately $1.55\times10^{21}$ radians per second which may be challenging to observe. But for Dirac quasiparticles it may be less difficult, see e.g. Ref. \cite{katn}.

More generally we can consider the case of constant momenta $p^A_\mu,p^B_\mu$ and constant four-potentials $A^A_\mu,A^B_\mu$ in Alice's and Bob's labs. Then the Hamiltonians $H_A$ and $H_B$ are independent of $t_A$ and $t_B$, respectively, and the evolution is $e^{iH_At_A}\otimes e^{iH_Bt_B}$.  The absolute value of $I_1$ is then periodic in $t_A$ with a period $\pi/\sqrt{\sum_{\mu=1,2,3}(p^A_\mu+A^A_\mu)^2+m_A^2}$ and periodic in $t_B$ with a period $\pi/\sqrt{\sum_{\mu=1,2,3}(p^B_\mu+A^B_\mu)^2+m_B^2}$. Likewise, $I_{2A}$ is periodic in $t_A$ with a period $\pi/\sqrt{\sum_{\mu=1,2,3}(p^A_\mu+A^A_\mu)^2+m_A^2}$ and $I_{2B}$ is periodic in $t_B$ with a period $\pi/\sqrt{\sum_{\mu=1,2,3}(p^B_\mu+A^B_\mu)^2+m_B^2}$.

\section{Reduced one-party matrices}\label{gloin}

From the matrix $\Psi_{AB}$ containing the state coefficients of the shared state of Alice and Bob we can construct reduced matrices associated to either Alice's lab or Bob's lab. Depending on the construction these have different transformation properties under the spinor representations of the local proper orthochronous Lorentz groups.

First we consider the analogues of reduced density matrices from non-relativistic quantum mechanics.
These are constructed as

 \begin{eqnarray}
\rho_A&&\equiv \Psi_{AB} \Psi_{AB}^\dagger\nonumber\\
\rho_B&&\equiv \Psi_{AB}^T \Psi_{AB}^*,
\end{eqnarray}
where $\rho_A$ is the reduced density matrix corresponding to Alice's lab, and $\rho_B$ is the reduced density matrix corresponding to Bob's lab. For transformations $S_A$ in Alice's lab and transformations $S_B$ in Bob's lab these matrices transform as

 \begin{eqnarray}
\Psi_{AB} \Psi_{AB}^\dagger &&\to S_A\Psi_{AB}S_B^T S_B^*\Psi_{AB}^\dagger S_A^\dagger\nonumber\\
 \Psi_{AB}^T \Psi_{AB}^* &&\to S_B\Psi_{AB}^TS_A^T S_A^*\Psi_{AB}^*  S_B^\dagger.
\end{eqnarray}

Since the local unitary evolutions generated by Dirac Hamiltonians and the spinor representations of proper orthochronous Lorentz transformations are full rank, and since for a complex matrix $X$ we have $\textrm{rank}[XX^\dagger]=\textrm{rank}[X]$, it follows that both local unitary evolution and proper orthochronous Lorentz transformations preserve the rank of $\rho_A$ and of $\rho_B$ and $\textrm{rank}[\rho_A]=\textrm{rank}[\rho_B]=\textrm{rank}[\Psi_{AB}]$.
If the shared state of Alice and Bob is a product state $\textrm{rank}[\Psi_{AB}]=1$ and thus the reduced density matrices $\rho_A$ and $\rho_B$ are both rank 1. Such reduced density matrices are called {\it pure}. If on the other hand the shared state is entangled the reduced density matrix corresponding to Alice and the reduced density matrix corresponding to Bob both have rank 2 or greater. A reduced density matrix with rank at least 2 is called {\it mixed}. 
 Moreover, since spinor representations of proper orthochronous Lorentz transformations have determinant 1 and
 the local unitary evolutions generated by Dirac Hamiltonians have determinants with absolute value 1 it follows that both preserve the determinant of the reduced density matrices and $\det[\rho_A]=\det[\rho_B]=\det[\Psi_{AB}]\det[\Psi_{AB}^\dagger]=|I_3|^2$.

The eigenvalues of $\rho_A$ and $\rho_B$ are preserved by local unitary evolution generated by Dirac Hamiltonians and under the spinor representations of rotations, but in general they are not preserved under the spinor representations of Lorentz boosts. Therefore the measures of mixedness used in non-relativistic quantum mechanics that are functions of the eigenvalues such as the {\it purity} and the {\it von Neumann entropy} are in general not invariant. The purity is defined for a reduced density matrix normalized to have trace 1 as $\Tr[\rho^2]$.
A pure reduced density matrix with trace 1 is a projector, i.e., $\rho^2=\rho$ and $\Tr[\rho^2]=1$.  For a mixed reduced density matrix with trace 1 we instead have $\rho^2\neq\rho$ and $\Tr[\rho^2]<1$. 
Thus a purity of less than one indicates entanglement in the shared state of Alice and Bob. The von Neumann entropy is defined as $-\Tr[\rho\ln(\rho)]$ \cite{neumann} and for a normalized reduced density matrix it is zero when the reduced density matrix is pure but nonzero otherwise.

While the purity and von Neumann entropy are in general not invariant under Lorentz transformations the preservation of the rank implies that they are still indicators of entanglement for normalized states. If a reduced density matrix is normalized to have unit trace, a von Neumann entropy greater than zero indicates entanglement of the shared state. Likewise, for a reduced density matrix normalized to have unit trace a purity of less than one indicates entanglement of the shared state.

Beyond the reduced density matrices we can construct other reduced matrices associated with either Alice's lab or Bob's lab. There are such matrices that transform non-trivially only under the spinor representation of the proper orthochronous Lorentz group acting in the lab they are associated with, and that are explicitly invariant under the spinor representation of the proper orthochronous Lorentz group acting in the lab they are not associated with.
Their eigenvalues are invariant under the spinor representations of the proper orthochronous Lorentz groups in both labs.  One such matrix is 

 \begin{eqnarray}
\tilde{\rho}_A\equiv C\Psi_{AB} C\Psi_{AB}^T.
\end{eqnarray}
It is invariant under the spinor representations of proper orthochronous Lorentz transformations of Bob's particle and invariant, up to a U(1) phase, under unitary evolutions generated by zero-mass Dirac Hamiltonians of Bob's particle.
Its eigenvalues are invariant under spinor representations of proper orthochronous Lorentz transformations acting on both particles and invariant, up to a U(1) phase, under unitary evolutions generated by zero-mass Dirac Hamiltonians acting on both particles. In particular $\det[\tilde{\rho}_A]=\det[\Psi_{AB}]^2=I_3^2$ and $\Tr[\tilde{\rho}_A]=2I_1$ and
the eigenvalues are $1/2(I_1-\sqrt{I_1^2-4I_3})$ with multiplicity 2 and $1/2(I_1+\sqrt{I_1^2-4I_3})$ with multiplicity 2.

Similarly, the matrix
 \begin{eqnarray}
\tilde{\rho}_B\equiv C\Psi_{AB}^T C\Psi_{AB},
\end{eqnarray}
is invariant under the spinor representations of proper orthochronous Lorentz transformations of Alice's particle and invariant, up to a U(1) phase, under unitary evolutions generated by zero-mass Dirac Hamiltonians of Alice's particle.
Its eigenvalues are invariant under spinor representations of proper orthochronous Lorentz transformations acting on both particles and invariant, up to a U(1) phase, under unitary evolutions generated by zero-mass Dirac Hamiltonians acting on both particles. In particular $\det[\tilde{\rho}_B]=\det[\Psi_{AB}]^2=I_3^2$ and $\Tr[\tilde{\rho}_B]=2I_1$ and its eigenvalues are the same as those of $\tilde{\rho}_A$, i.e.,
 $1/2(I_1-\sqrt{I_1^2-4I_3})$ with multiplicity 2 and $1/2(I_1+\sqrt{I_1^2-4I_3})$ with multiplicity 2.

Since for square matrices of the same dimension $X,Y$ we have that $\textrm{rank}[XY]\leq \min(\textrm{rank}[X],\textrm{rank}[ Y])$  it follows that
$\textrm{rank}[\tilde{\rho}_A]=\textrm{rank}[\tilde{\rho}_B]\leq \textrm{rank}[ \Psi_{AB}]=\textrm{rank}[\rho_{A}]=\textrm{rank}[\rho_{B}]$. Therefore if $I_3\neq 0$ we have $\textrm{rank}[\tilde{\rho}_A]=\textrm{rank}[\tilde{\rho}_B]=\textrm{rank}[{\rho}_A]=\textrm{rank}[{\rho}_B]=4$ and if $I_3= 0$ but $I_1\neq 0$ we have $2=\textrm{rank}[\tilde{\rho}_A]=\textrm{rank}[\tilde{\rho}_B]\leq\textrm{rank}[{\rho}_A]=\textrm{rank}[{\rho}_B]<4$.

The matrix

 \begin{eqnarray}
\tilde{\eta}_A\equiv C\gamma^5\Psi_{AB} C\gamma^5\Psi_{AB}^T,
\end{eqnarray}
is invariant under the spinor representations of proper orthochronous Lorentz transformations of Bob's particle and invariant, up to a U(1) phase, under unitary evolutions generated by arbitrary-mass Dirac Hamiltonians of Bob's particle.
Its eigenvalues are invariant under spinor representations of proper orthochronous Lorentz transformations acting on both particles and invariant, up to a U(1) phase, under unitary evolutions generated by arbitrary-mass Dirac Hamiltonians acting on both particles. In particular $\det[\tilde{\eta}_A]=\det[\Psi_{AB}]^2=I_3^2$ and $\Tr[\tilde{\eta}_A]=2I_2$ and its eigenvalues  are $1/2(I_2-\sqrt{I_2^2-4I_3})$ with multiplicity 2 and $1/2(I_2+\sqrt{I_2^2-4I_3})$ with multiplicity 2.

Similarly, the matrix
 \begin{eqnarray}
\tilde{\eta}_B\equiv C \gamma^5\Psi_{AB}^T C\gamma^5\Psi_{AB},
\end{eqnarray}
is invariant under the spinor representations of proper orthochronous Lorentz transformations of Alice's particle and invariant, up to a U(1) phase, under unitary evolutions of Alice's particle generated by arbitrary-mass Dirac Hamiltonians.
Its eigenvalues are invariant under spinor representations of proper orthochronous Lorentz transformations acting on both particles and invariant, up to a U(1) phase, under unitary evolutions generated by arbitrary-mass Dirac Hamiltonians acting on both particles. In particular $\det[\tilde{\eta}_B]=\det[\Psi_{AB}]^2=I_3^2$ and $\Tr[\tilde{\eta}_B]=2I_2$ an its eigenvalues are the same as those of $\tilde{\eta}_A$, i.e., $1/2(I_2-\sqrt{I_2^2-4I_3})$ with multiplicity 2 and $1/2(I_2+\sqrt{I_2^2-4I_3})$ with multiplicity 2.

Since $\textrm{rank}[\tilde{\eta}_A]=\textrm{rank}[\tilde{\eta}_B]\leq \textrm{rank}[\Psi_{AB}]=\textrm{rank}[\rho_{A}]=\textrm{rank}[\rho_{B}]$ it follows that if $I_3\neq 0$ we have $\textrm{rank}[\tilde{\eta}_A]=\textrm{rank}[\tilde{\eta}_B]=\textrm{rank}[{\rho}_A]=\textrm{rank}[{\rho}_B]=4$ and if $I_3= 0$ but $I_2\neq 0$ we have $2=\textrm{rank}[\tilde{\eta}_A]=\textrm{rank}[\tilde{\eta}_B]\leq\textrm{rank}[{\rho}_A]=\textrm{rank}[{\rho}_B]<4$.

The matrix

 \begin{eqnarray}
\tilde{\zeta}_A\equiv C\Psi_{AB} C\gamma^5\Psi_{AB}^T,
\end{eqnarray}
is invariant under the spinor representations of proper orthochronous Lorentz transformations of Bob's particle and invariant, up to a U(1) phase, under unitary evolutions of Bob's particle generated by arbitrary-mass Dirac Hamiltonians.
Its eigenvalues are invariant under spinor representations of proper orthochronous Lorentz transformations acting on both particles and invariant, up to a U(1) phase, under unitary evolutions generated by arbitrary-mass Dirac Hamiltonians for Bob's particle and zero-mass Dirac Hamiltonians for Alice's particle. In particular $\det[\tilde{\zeta}_A]=\det[\Psi_{AB}]^2=I_3^2$ and $\Tr[\tilde{\zeta}_A]=2I_{2A}$. 
The eigenvalues are $1/2(I_{2A}-\sqrt{I_{2A}^2+4I_3})$ with multiplicity 2 and $1/2(I_{2A}+\sqrt{I_{2A}^2+4I_3})$ with multiplicity 2.

Similarly, the matrix
 \begin{eqnarray}
\tilde{\zeta}_B\equiv C \gamma^5\Psi_{AB}^T C\Psi_{AB},
\end{eqnarray}
is invariant under the spinor representations of proper orthochronous Lorentz transformations of Alice's particle and invariant, up to a U(1) phase, under unitary evolutions of Alice's particle generated by zero-mass Dirac Hamiltonians.
Its eigenvalues are invariant under spinor representations of proper orthochronous Lorentz transformations acting on both particles and invariant, up to a U(1) phase, under unitary evolutions generated by arbitrary-mass Dirac Hamiltonians for Bob's particle and zero-mass Dirac Hamiltonians for Alice's particle. In particular $\det[\tilde{\zeta}_B]=\det[\Psi_{AB}]^2=I_3^2$ and $\Tr[\tilde{\zeta}_B]=2I_{2A}$. 
The eigenvalues are $1/2(I_{2A}-\sqrt{I_{2A}^2+4I_3})$ with multiplicity 2 and $1/2(I_{2A}+\sqrt{I_{2A}^2+4I_3})$ with multiplicity 2.

Since $\textrm{rank}[\tilde{\zeta}_A]=\textrm{rank}[\tilde{\zeta}_B]\leq \textrm{rank}[\Psi_{AB}]=\textrm{rank}[\rho_{A}]=\textrm{rank}[\rho_{B}]$ it follows that if $I_3\neq 0$ we have $\textrm{rank}[\tilde{\zeta}_A]=\textrm{rank}[\tilde{\zeta}_B]=\textrm{rank}[{\rho}_A]=\textrm{rank}[{\rho}_B]=4$ and if $I_3= 0$ but $I_{2A}\neq 0$ we have $2=\textrm{rank}[\tilde{\zeta}_A]=\textrm{rank}[\tilde{\zeta}_B]\leq\textrm{rank}[{\rho}_A]=\textrm{rank}[{\rho}_B]<4$.

The matrix

 \begin{eqnarray}
\tilde{\kappa}_A\equiv C\gamma^5\Psi_{AB} C\Psi_{AB}^T,
\end{eqnarray}
is invariant under the spinor representations of proper orthochronous Lorentz transformations of Bob's particle and invariant, up to a U(1) phase, under unitary evolutions of Bob's particle generated by zero-mass Dirac Hamiltonians.
Its eigenvalues are invariant under spinor representations of proper orthochronous Lorentz transformations acting on both particles and invariant, up to a U(1) phase, under unitary evolutions generated by arbitrary-mass Dirac Hamiltonians for Alice's particle and zero-mass Dirac Hamiltonians for Bob's particle. In particular $\det[\tilde{\kappa}_A]=\det[\Psi_{AB}]^2=I_3^2$ and $\Tr[\tilde{\kappa}_A]=2I_{2B}$. 
The eigenvalues are $1/2(I_{2B}-\sqrt{I_{2B}^2+4I_3})$ with multiplicity 2 and $1/2(I_{2B}+\sqrt{I_{2B}^2+4I_3})$ with multiplicity 2.

Similarly, the matrix
 \begin{eqnarray}
\tilde{\kappa}_B\equiv C \Psi_{AB}^T C\gamma^5\Psi_{AB},
\end{eqnarray}
is invariant under the spinor representations of proper orthochronous Lorentz transformations of Alice's particle and invariant, up to a U(1) phase, under unitary evolutions of Alice's particle generated by arbitrary-mass Dirac Hamiltonians.
Its eigenvalues are invariant under spinor representations of proper orthochronous Lorentz transformations acting on both particles and invariant, up to a U(1) phase, under unitary evolutions generated by arbitrary-mass Dirac Hamiltonians for Alice's particle and zero-mass Dirac Hamiltonians for Bob's particle. In particular $\det[\tilde{\kappa}_B]=\det[\Psi_{AB}]^2=I_3^2$ and $\Tr[\tilde{\kappa}_B]=2I_{2B}$. 
The eigenvalues are $1/2(I_{2B}-\sqrt{I_{2B}^2+4I_3})$ with multiplicity 2 and $1/2(I_{2B}+\sqrt{I_{2B}^2+4I_3})$ with multiplicity 2.

Since $\textrm{rank}[\tilde{\kappa}_A]=\textrm{rank}[\tilde{\kappa}_B]\leq \textrm{rank}[\Psi_{AB}]=\textrm{rank}[\rho_{A}]=\textrm{rank}[\rho_{B}]$ it follows that if $I_3\neq 0$ we have $\textrm{rank}[\tilde{\kappa}_A]=\textrm{rank}[\tilde{\kappa}_B]=\textrm{rank}[{\rho}_A]=\textrm{rank}[{\rho}_B]=4$ and if $I_3= 0$ but $I_{2A}\neq 0$ we have $2=\textrm{rank}[\tilde{\kappa}_A]=\textrm{rank}[\tilde{\kappa}_B]\leq\textrm{rank}[{\rho}_A]=\textrm{rank}[{\rho}_B]<4$.

\subsection{Mixedness of reduced density matrices in relation to maximal absolute values of the invariants $I_1,I_2,I_{2A},I_{2B}$ and $I_{3}$}

When the absolute value of a determinant $|\det[{X}]|$ is maximized over the complex $d\times d$ matrices $X$ subject to a constraint $\Tr[X X^\dagger]=N$ for some $N\in \mathbb{R}^+$ we have that $X X^\dagger=Nd^{-1} I_d$ where $I_d$ is the $d\times d$ identity matrix. This follows since $X X^\dagger$ is a positive Hermitian matrix and its eigenvalues $z_i$ satisfy $(1/d\sum z_i)^d\geq \prod z_i$, i.e., the arithmetic mean is greater or equal to the geometric mean, with equality if and only if $z_1=z_2=\dots=z_d$. Restated this inequality is $(1/d\Tr[X X^\dagger])^d\geq \det[X X^\dagger]$. Thus if $\Tr[X X^\dagger]=N$ and the maximal value of $\det[X X^\dagger]$ subject to this constraint is achieved it follows that $X X^\dagger=Nd^{-1} I_d$. Since $\det[X X^\dagger]=\det[X]\det[ X^\dagger]=|\det[{X}]|^2$ this implies that the maximal value of $|\det[{X}]|$ is achieved.
Therefore, when $|I_3|$ takes its maximal value for a given normalization of a state the reduced density matrices are proportional to the identity. If we consider reduced density matrices of two Dirac spinors, normalized to have trace one, the minimal purity is achieved for a reduced density matrix proportional to the identity matrix and is $1/4$.  For the same case the von Neumann entropy reaches its maximal value $\ln(4)$. A reduced density matrix for which the purity is minimal or equivalently the von Neumann entropy maximal is called maximally mixed. The corresponding shared entangled state is called {\it maximally entangled}.
Since $I_3$ is a determinant it is invariant under local SL(4,$\mathbb{C}$) transformations. It is a general property of polynomials in the state coefficients invariant under local SL(4,$\mathbb{C}$) transformations that they achieve their maximal absolute value on a set of maximally entangled states as shown in Theorem 1 of Ref. \cite{verstraete}.

The other four Lorentz invariants $I_1,I_2,I_{2A}$ and $I_{2B}$ are not invariant under local SL(4,$\mathbb{C}$) and thus Theorem 1 of Ref. \cite{verstraete} does not apply. Their maximal absolute values for a given normalization are not necessarily reached on a set that consists of only maximally entangled states.
Each of $I_1,I_2,I_{2A}$ and $I_{2B}$ is a sum of four determinants of $2\times 2$ non-overlapping minors of $\Psi_{AB}$ such that each pair of minors form either two full rows, two full columns, or does not form a single full row or column. Let $M_k$ be the minors
corresponding to one of the invariants. Then if the given invariant has achieved its maximal absolute value for a normalized state the minors satisfy $M_kM_k^\dagger=\omega_k I$ where $\omega_k\geq 0$ and $\sum_k{\omega_k}=1/2$. The eigenvalues of the reduced density matrix
are $1/4\pm 1/2\sqrt{1/4-4 (\omega_2 \omega_3 + \omega_1 \omega_4  + x_1)}$ and $1/4\pm 1/2\sqrt{1/4-4 (\omega_2 \omega_3 + \omega_1 \omega_4  + x_2)}$ where $x_1$ and $x_2$ are the eigenvalues of $M_1 M_3^\dagger M_4 M_2^\dagger+M_2 M_4^\dagger M_3 M_1^\dagger$.  Optimizing over the $M_k$ gives that the values of $x_1$ and $x_2$ satisfy $-2\sqrt{\omega_1 \omega_2 \omega_3 \omega_4 }\leq x_1\leq 2\sqrt{\omega_1 \omega_2 \omega_3 \omega_4 }$ and $-2\sqrt{\omega_1 \omega_2 \omega_3 \omega_4 }\leq x_2\leq 2\sqrt{\omega_1 \omega_2 \omega_3 \omega_4 }$ with any values of $x_1$ and $x_2$ in these intervals independently achievable. It follows that $0\leq(\omega_2 \omega_3 + \omega_1 \omega_4  + x_1)\leq 1/16$ and $0\leq(\omega_2 \omega_3 + \omega_1 \omega_4  + x_2)\leq 1/16$ with any values in these intervals of the two expressions independently achievable.
Therefore,  if any of $I_1,I_2,I_{2A}$ and $I_{2B}$ achieves its maximal absolute value the eigenvalues of the reduce density matrices are constrained to the form $1/4-\alpha,1/4+\alpha,1/4+\beta,1/4-\beta $ for $0\leq\alpha\leq 1/4$ and $0\leq\beta\leq 1/4$ but any such $\alpha$ and $\beta$ are allowed. Consequently, for a normalized state any value of the purity between 1/2 and 1/4 or alternatively any value of the von Neumann entropy between $\ln(2)$ and $\ln(4)$ is compatible with a maximal absolute value of any of the Lorentz invariants $I_1,I_2,I_{2A}$ and $I_{2B}$. Thus a maximal absolute value of any of the Lorentz invariants $I_1,I_2,I_{2A}$ and $I_{2B}$ only implies an upper bound on the purity of 1/2 or equivalently a lower bound of $\ln(2)$ of the von Neumann entropy.

\section{The zero locus of all homogeneous polynomial Lorentz invariants}\label{locus}

It is not clear if additional homogeneous polynomials exist, beyond $I_1,I_{2}$, $I_{2A},I_{2B}$ and $I_{3}$ and their algebraic combinations, that are invariant under the proper orthochronous Lorentz group as well as local unitary evolution generated by either zero- or arbitrary mass Dirac Hamiltonians. But even if such additional polynomials exist it is not possible to distinguish all inequivalent forms of entanglement using homogeneous polynomial Lorentz invariants.
The zero locus of the set of all such polynomials, i.e., the set of states for which no such polynomial is non-zero, contains entangled states. The entangled states in the zero locus cannot be distinguished by these polynomials from product states.

One way to see why there is entangled states in the zero locus is to note that there exist parametrized families of spinor representations of local Lorentz boosts for which some entangled states are eigenstates. Theses parametrized families are such that by varying the parameters the eigenvalues multiplying the eigenstates can be any positive real number. No homogeneous polynomial can be invariant under the action of such a family of spinor representations of Lorentz boosts unless it is identically zero on all such eigenstates of the family.

As an example we can consider the family of states 

\begin{eqnarray}
&&\alpha (\phi_0-\phi_2)\otimes(\phi_0-\phi_2)+\beta(\phi_0-\phi_2)\otimes(\phi_1+\phi_3)\nonumber\\
+&&\gamma(\phi_1+\phi_3)\otimes(\phi_0-\phi_2)+\delta (\phi_1+\phi_3)\otimes(\phi_1+\phi_3).
\end{eqnarray}
As long as $\alpha \delta-\beta \gamma \neq 0$ the corresponding state is entangled and $\Psi_{AB}$ has rank 2. Otherwise the state is a product state and $\Psi_{AB}$ has rank 1.

This family of states are eigenstates of the family $\exp\left(a \gamma^0\gamma^3\right)\otimes\exp\left(b \gamma^0\gamma^3\right)$ of spinor representations of Lorentz boosts parametrized by $a,b\in \mathbb{R}$ and where $\exp\left(a \gamma^0\gamma^3\right)$ acts on Alice's particle and $\exp\left(b \gamma^0\gamma^3\right)$ acts on Bob's particle. The eigenvalues are $e^{-(a+b)}$ and thus the eigenvalue can be any positive real number for some choice of $a$ and $b$.

Note that this reason why the zero locus of all the homogeneous Lorentz invariant polynomials contain entangled states is analogous to the reason why for a system of three or more non-relativistic two-component spinors the zero locus of all polynomials invariant under local $\mathrm{SL}(2,\mathbb{C})$ operations contain entangled states \cite{dur}. An example are the so called W-states defined for three or more non-relativistic two-component spinors \cite{dur}.

\section{The local reversible operations and the invariance groups of $|I_1|,|I_2|,|I_{2A}|,|I_{2B}|$ and $|I_3|$}\label{lie}
In Section \ref{spen} the local reversible operations were defined as the local operations that can be implemented by local unitary evolution of the system and changes of local reference frames. The unitary evolutions were defined in Section \ref{ham} as generated by time dependent, bounded and strongly continuous Hamiltonians. The set of local reversible transformations that can be implemented on a pair of Dirac spinors thus depends on the set of physically allowed Hamiltonians. 
A few different classes of Hamiltonians defined in terms of different degrees in the gamma matrices were considered in Section \ref{ham}.

In this Appendix we elaborate on the properties of the sets of reversible operations, including properties that depend on the specific class of allowed Hamiltonians.
We also consider the different groups of transformations that leave the bilinear forms $\psi^T C\varphi$ and $\psi^T C\gamma^5\varphi$ and the polynomials $I_1,I_2,I_{2A},I_{2B}$ and $I_3$ invariant up to a U(1) phase. The relations between these groups and the sets of reversible operations are described.
For this purpose we use the following Theorem

\begin{theorem}\label{tbv}
Let $G$ be a continuous connected group generated by the elements $\exp (X_i t)$ for $t\in \mathbb{R}$ and $X_i\in S$ where $S$ is a set of matrices. 
Then $\exp ((a X_i+b X_j)t)$ and $\exp ([X_i,X_j]t)$ are in the closure $\mathrm{cl}\phantom{i} G$ of $G$ for all $a,b,t\in \mathbb{R}$ and $X_i,X_j\in S$. Let $\mathfrak{h}$ be the real Lie algebra spanned by the $X_i\in S$ and the commutators $[X_i,X_j]$ for $X_i,X_j\in S$.
Let $H$ be the connected matrix Lie group that is generated by the exponentials $\exp (Z_i )$ for $Z_i\in \mathfrak{h}$.
Then $G$ is a dense subset of $H$, i.e., for any $h\in H$ every neighbourhood of $h$ contains at least one element of $G$. If $G$ contains its limit points we have that $G=H$.
\end{theorem}
\begin{proof}
Let $X,Y$ be such that $\exp (Xt)\in G$ and $\exp (Yt)\in G$ for all $t\in \mathbb{R}$.
Consider the Baker-Campbell-Hausdorff formula
 \begin{eqnarray}
\exp (aXt)\exp (bYt)=\exp\left(aXt+bYt+[aX,bY]\frac{t^2}{2}+\mathcal{O}(t^3)\right),\nonumber\\
\end{eqnarray}
where the third and higher order terms in $t$ are denoted $\mathcal{O}(t^3)$.  
From this formula follows that
\begin{eqnarray}\label{lab1}
\left(\exp \left(\frac{ta}{n}X\right)\exp \left(\frac{tb}{n}Y\right)\right)^n=\exp\left((aX+bY)t+\mathcal{O}\left(\frac{1}{n}\right)\right),
\end{eqnarray}
where first and higher order terms in $\frac{1}{n}$ are denoted $\mathcal{O}(\frac{1}{n})$.
Taking the limit $n\to\infty$ gives $\exp ((a X+b Y)t)$. 
Thus a sequence of elements in $G$ exists that converges to $\exp ((a X+b Y)t)$. By definition $\exp ((a X+b Y)t)$ is a limit point of $G$ and belongs to $\mathrm{cl}\phantom{i} G$ for all $t\in \mathbb{R}$.
Next consider
\begin{eqnarray}\label{lab2}
&&\left(\exp \left(\frac{t}{n}X\right)\exp \left(\frac{t}{n}Y\right)\exp \left(-\frac{t}{n}X\right)\exp \left(-\frac{t}{n}Y\right)\right)^{n^2}\nonumber\\
=&&\exp\left([X,Y]t^2+\mathcal{O}\left(\frac{1}{n}\right)\right).
\end{eqnarray}
Taking the limit $n\to\infty$ gives  $\exp ([X,Y]t^2)$. Exchanging the roles of $X$ and $Y$ gives the limit $\exp ([Y,X]t^2)=\exp (-[X,Y]t^2)$. Thus $\exp ([X,Y]t)$ is a limit point of $G$ and belongs to $\mathrm{cl}\phantom{i} G$ for all $t\in \mathbb{R}$.
Let $Z$ be the smallest real matrix Lie algebra that contains all $X,Y$ such that $\exp (Xt)\in G$ and $\exp (Yt)\in G$ for all $t\in \mathbb{R}$.
By concatenated use of Eq. (\ref{lab1}) and Eq. (\ref{lab2}) we see that every $\exp (Z)$ for $Z\in{\mathfrak{h}}$ is in $\mathrm{cl}\phantom{i} G$.  
Moreover if $H$ is the connected matrix Lie group generated by the $\exp (Z)$ for $Z\in{\mathfrak{h}}$, by definition every $h\in H$ can be expressed as $h=\exp (Z_1)\exp (Z_2)\dots \exp (Z_m)$ for $Z_j\in{\mathfrak{h}}$.  Therefore every $h\in H$ is in $\mathrm{cl}\phantom{i} G$. Since $G$ is a subgroup of $H$ we have that $\mathrm{cl}\phantom{i} G=H$.
See also e.g. Ref. \cite{wall} Ch. 1.3.3. or Ref. \cite{hall}  Ch. 3.3.
\end{proof}

Based on Theorem \ref{tbv} we can make the following Corollary

\begin{corollary}\label{tbc}
Let $G$ be a continuous connected group generated by the elements $\exp (X_i t)$ for $t\in \mathbb{R}$ and $X_i\in S$ where $S$ is a set of matrices. Let $\mathfrak{h}$ be the real Lie algebra spanned by the $X_i\in S$ and the commutators $[X_i,X_j]$ for $X_i,X_j\in S$.
Let $H$ be the connected matrix Lie group that is generated by the exponentials $\exp (Z_i )$ for $Z_i\in \mathfrak{h}$.
Let $F$ be a subset of $H$ such that for any $g\in G$ every neighbourhood of $g$ contains an element in $F$ and let $F$ be a semigroup. 
Then $F$ is a dense subset of $H$, i.e., for any $h\in H$ every neighbourhood of $h$ contains at least one element of $F$. If $F$ contains its limit points we have that $F=H$.
\end{corollary}
\begin{proof}
Consider $\exp (Xt)$ for $X\in S$, $t\in \mathbb{R}$ and a neighbourhood $U_\epsilon$ of $\exp (Xt)$ defined by $U_\epsilon\equiv\{\exp (Xt)+W\epsilon |\phantom{u}||W||_2\leq 1\}$ for some $\epsilon>0$, where $||\cdot||_2$ is the spectral norm. 
By assumption for every $X\in S$, $t\in \mathbb{R}$, and $\epsilon>0$ there exists an element $f\in F$ such that $f\in U_\epsilon$. Since the spectral norm is submultiplicative we have that $||WM||_2\leq ||M||_2$ for every matrix $M$ and every $W$ such that $\exp (Xt)+W\epsilon\in U_\epsilon$. Therefore for any product $\epsilon^nM_0W_1M_1W_2M_2\dots W_nM_n$ where $||W_k||_2\leq 1$ for all $k$ we have $||\epsilon^nM_0W_1M_1W_2M_2\dots W_nM_n||_2\leq \epsilon^n||M_0||_2||M_1||_2\dots||M_n||_2$. Let the $M_k$ be matrices with finite spectral norm that do not depend on $\epsilon$ and consider sequences $(\epsilon_j, W_{jk})$ such that $\epsilon_j>\epsilon_{j+1}>0$ and $||W_{jk}||_2\leq 1$ with $\lim_{j\to \infty}\epsilon_j= 0$.
Then we have that $\lim_{j\to \infty}||\epsilon_j^nM_0W_{j1}M_1W_{j2}M_2\dots W_{jn}M_n||_2\leq\lim_{j\to \infty}\epsilon_j^n||M_0||_2||M_1||_2\dots||M_n||_2=0$ and thus $\lim_{j\to \infty}\epsilon_j^nM_0W_{j1}M_1W_{j2}M_2\dots W_{jn}M_n=0$ for any positive integer $n$. 

Denote by $\mathcal{O}\left(\epsilon\right)$ any sum on the form $\sum_k\sum_{n>0} \epsilon_j^nM_{0,k}W_{j1,k}M_{1,k}W_{j2,k}M_{2,k}\dots W_{jn,k}M_{n,k}$ where the $M_{l,k}$ have finite spectral norm and do not depend on $\epsilon$ and $||W_{jl,k}||_2\leq 1$ for all $jl,k$. Then for any $X,Y\in S$ and any $\epsilon_j>0$ by assumption there exist elements $f_1,f_2\in F$ such that $f_1f_2=\exp \left(\frac{ta}{n}X\right)\exp \left(\frac{tb}{n}Y\right)+\mathcal{O}\left(\epsilon_j\right)$. 
 By the Baker-Campbell-Hausdorff formula we have 
\begin{eqnarray}\label{lab1x}
&&\left(\exp \left(\frac{ta}{n}X\right)\exp \left(\frac{tb}{n}Y\right)+\mathcal{O}\left(\epsilon_j\right)\right)^n\nonumber\\=&&\exp\left((aX+bY)t+\mathcal{O}\left(\frac{1}{n}\right)\right)+\mathcal{O}\left(\epsilon_j\right),
\end{eqnarray}
where first and higher order terms in $\frac{1}{n}$ are denoted $\mathcal{O}(\frac{1}{n})$. 
Taking the limit $n\to\infty$ and $\epsilon_j\to 0$ gives $\exp ((a X+b Y)t)$.  Since $F$ is a semigroup $(f_1f_2)^n\in F$ for any $n>0$ and thus $\exp ((a X+b Y)t)$ is a limit point of $F$ for all $t\in \mathbb{R}$.
Similarly, for any $X,Y\in S$ and any $\epsilon_j>0$ by assumption there exist elements $f_1,f_2,f_3,f_4\in F$ such that $f_1f_2f_3f_4=\exp \left(\frac{t}{n}X\right)\exp \left(\frac{t}{n}Y\right)\exp \left(-\frac{t}{n}X\right)\exp \left(-\frac{t}{n}Y\right)+\mathcal{O}\left(\epsilon_j\right)$. By the Baker-Campbell-Hausdorff formula we have
\begin{eqnarray}\label{lab2x}
&&\left(\exp \left(\frac{t}{n}X\right)\exp \left(\frac{t}{n}Y\right)\exp \left(-\frac{t}{n}X\right)\exp \left(-\frac{t}{n}Y\right)+\mathcal{O}\left(\epsilon_j\right)\right)^{n^2}\nonumber\\
=&&\exp\left([X,Y]t^2+\mathcal{O}\left(\frac{1}{n}\right)\right)+\mathcal{O}\left(\epsilon_j\right).
\end{eqnarray}
Taking the limit $n\to\infty$ and $\epsilon_j\to 0$ gives $\exp ([X,Y]t^2)$. Exchanging the roles of $X$ and $Y$ gives the limit $\exp ([Y,X]t^2)=\exp (-[X,Y]t^2)$. Since $F$ is a semigroup $(f_1f_2f_3f_4)^n\in F$ for any $n>0$ and thus $\exp ([X,Y]t)$ is a limit point of $F$ for all $t\in \mathbb{R}$.
Let $Z$ be the smallest real matrix Lie algebra that contains all $X,Y$ such that $\exp (Xt)\in G$ and $\exp (Yt)\in G$ for all $t\in \mathbb{R}$.
By concatenated use of Eq. (\ref{lab1x}) and Eq. (\ref{lab2x}) we see that every $\exp (Z)$ for $Z\in{\mathfrak{h}}$ is in $\mathrm{cl}\phantom{i} F$.  
Moreover if $H$ is the connected matrix Lie group generated by the $\exp (Z)$ for $Z\in{\mathfrak{h}}$, by definition every $h\in H$ can be expressed as $h=\exp (Z_1)\exp (Z_2)\dots \exp (Z_m)$ for $Z_j\in{\mathfrak{h}}$.  Therefore every $h\in H$ is in $\mathrm{cl}\phantom{i} F$. Since $F$ is a subset of $H$ we have that $\mathrm{cl}\phantom{i} F=H$.
\end{proof}

\subsection{The set of local unitary operations}\label{uni}

Let $S_H$ be a convex set of time independent bounded Hamiltonians and let $S_{H(t)}$ be the set of time dependent strongly continuous Hamiltonians such that 
$H(t)\in S_{H(t)} $ if and only if $H(t)|_{t=s}\in S_{H}$ for all $s$. We then consider the set of unitary transformations that can be implemented by evolutions generated by $H(t)\in S_{H(t)} $.
To do so we first look at the unitary transformations defined as products of exponentials $\exp (iH_j \Delta_j)$ for $H_j\in S_H$ 

\begin{eqnarray}\label{qwes}
e^{iH_n\Delta_n }e^{iH_{n-1}\Delta_{n-1} }\dots e^{iH_1\Delta_1 }e^{iH_{0}\Delta_0 },
\end{eqnarray}
where $\Delta_j\in\mathbb{R}$. If for any $H_j,H_k\in S_H$ we have that $a H_j+b H_k\in S_H$ for all $a,b\in \mathbb{R}$ and $i[H_j,H_k]\in S_H$ the set of skew-Hermitian matrices $iH_j$ for $H_j\in S_H$ is a real matrix Lie algebra and the set of all unitary transformations on the form in Eq. (\ref{qwes}) is a connected matrix Lie group. Even if the set of skew-Hermitian matrices $iH_j$ for $H_j\in S_H$ is not a matrix Lie algebra we can consider the Lie algebra $\overline{\mathfrak{s}}_H$ obtained as the $\mathbb{R}$-algebra over $iH_j$ and $[iH_j,iH_k]$ for $H_j,H_k\in S_H$. Let $G^{\overline{\mathfrak{s}}_H}$ be the connected matrix Lie group that is generated by the exponentials $\exp (zt)$ for $t\in \mathbb{R}$ and $z\in \overline{\mathfrak{s}}_H$.
By Theorem \ref{tbv} the set of all unitary transformations on the form in Eq. (\ref{qwes}) form a dense subset of $G^{\overline{\mathfrak{s}}_H}$.

If the zero matrix is in the relative interior of $S_H$, i.e., if the intersection of some neighbourhood of the zero matrix with the affine hull of $S_H$ is contained in $S_H$, the unitary transformation $\exp (iH_j \Delta_j)$ for $H_j\in S_H$ can be exactly implemented by an evolution generated by a Hamiltonian in $S_{H(t)}$ for any $\Delta_j\in\mathbb{R}$. For any $\Delta_j$ there is some sufficiently large time interval $[0,t]$ for which we can choose a time dependent Hamiltonian $H(s)=H_j f(s)$ where $f(s)$ is bounded, continuous and such that $f(0)=f(t)=0$, $\int_{0}^tf(s)ds=\Delta_j$, and $H_j f(s)\in S_H$ for all $s\in [0,t]$. Thus every unitary on the form in Eq. (\ref{qwes}) can be implemented by evolutions generated by $H(t)\in S_{H(t)} $. Furthermore, since any unitary transformation that can be implemented by an evolution generated by $H(t)\in S_{H(t)}$ is in $G^{\overline{\mathfrak{s}}_H}$ it follows that these unitary transformations is a dense subset of $G^{\overline{\mathfrak{s}}_H}$.

If the zero matrix is not in the relative interior of $S_H$ but the transformations $\exp (iH_j \Delta_j)$ for $\Delta_j\in\mathbb{R}$ is a compact one-parameter group for every $H_j\in S_H$ we can still implement an arbitrarily good approximation of the unitary transformations in Eq. (\ref{qwes}) using Hamiltonians in $S_{H(t)}$ for any $\Delta_j\in\mathbb{R}$. Compactness implies that there exists a $P>0$ such that $\exp (iH_j P)=I$ and thus it is sufficient to consider $P\geq\Delta_j\geq 0$.
 Since $S_H$ is a convex set each factor $\exp (iH_j \Delta_j)$ where $\Delta_j>0$ can be approximated arbitrarily well for a time interval $[0,\Delta_j]$ by $\exp (iH(t))$ where $H(t)=H_{j-1}+(H_j-H_{j-1})\frac{t}{\epsilon}$ for $0\leq t\leq\epsilon$, $H(t)=H_{j+1}+(H_{j}-H_{j+1})\frac{\Delta_j-t}{\epsilon}$ for  $\Delta_j-\epsilon\leq t\leq\Delta_j$ and $H(t)=H_j$ for $\epsilon\leq t\leq \Delta_j-\epsilon$ for arbitrarily small $\epsilon>0$. This construction provides a continuous family of transformations parametrized by $\epsilon$ with $\exp (iH_j \Delta_j)$ as the limit point when $\epsilon\to 0$. 
Moreover, since any unitary transformation that can be implemented by an evolution generated by $H(t)\in S_{H(t)}$  is in $G^{\overline{\mathfrak{s}}_H}$ it follows from Theorem \ref{tbv} and Corollary \ref{tbc} that for any unitary transformation $g\in G^{\overline{\mathfrak{s}}_H}$ every neighbourhood of $g$ contains a unitary operation that can be generated by $H(t)\in S_{H(t)}$.

We can conclude from the above that the set of unitary transformations that can be implemented by evolutions generated by $H(t)\in S_{H(t)} $ is a dense subset of $G^{\overline{\mathfrak{s}}_H}$ if either the zero matrix is in the relative interior of $S_H$ or if the transformations $\exp (iH \Delta)$ for $\Delta\in\mathbb{R}$ is a compact one-parameter group for every $H\in S_H$.

Now consider a convex set of bounded Hamiltonians $S_H$ such that the set of all unitary transformations that can be implemented by strongly continuous Hamiltonians in $S_{H(t)}$ leave the bilinear form $\psi^T C\varphi$ invariant, up to a U(1) phase. 
Then, since $\psi^T C\varphi$ is a continuous function on the Hilbert space it is invariant, up to a U(1) phase, also under all transformations in the Lie group $G^{\overline{\mathfrak{s}}_H}$. By an analogous argument there is a Lie group preserving the bilinear form $\psi^T C\gamma^5\varphi$, up to a U(1) phase, for every set of unitary transformations implemented by strongly continuous Hamiltonians, that preserve $\psi^T C\gamma^5\varphi$ up to a U(1) phase.
We therefore construct these Lie groups for three specific cases of Hamiltonians; The Dirac Hamiltonians with non-zero mass, the Dirac Hamiltonian with zero mass and the Dirac Hamiltonian with zero mass and a coupling to a Yukawa pseudo-scalar boson.

Consider a Dirac particle with momentum $\bold{k}$, non-zero mass $m$ and allow any bounded time dependent values of the electromagnetic four-potential $A_{\mu}$.
The Dirac Hamiltonian for this case is on the form
\begin{eqnarray}\label{tred}
H_D=-\sum_{\mu=1,2,3}\gamma^0\gamma^\mu(k_\mu-qA_{\mu}(t))+qA_0(t)I +m\gamma^0.
\end{eqnarray}
For any time independent Hamiltonian on the form $H_D=\sum_{\mu=1,2,3}a_\mu\gamma^0\gamma^\mu +b\gamma^0$ where $a_\mu,b\in \mathbb{R}$ the elements $\exp (iH_Dt)$ for $t\in\mathbb{R}$ form a compact one-parameter group. 
 Thus from Theorem \ref{tbv} and Corollary \ref{tbc} follows that the set of unitary transformations generated by strongly continuous Dirac Hamiltonians of the kind in Eq. (\ref{tred}) is dense in the connected matrix Lie group $G_U^{C\gamma^5}$ generated by the exponentials of the real Lie algebra spanned by the 11 linearly independent matrices $i\gamma^0$, $\gamma^1$, $\gamma^2$, $\gamma^3$,  $i\gamma^0 \gamma^1$, $i\gamma^0 \gamma^2$, $i\gamma^0 \gamma^3$, $\gamma^1 \gamma^2$, $\gamma^1 \gamma^3$, $\gamma^2 \gamma^3$ and $iI$. If we allow arbitrary Hamiltonians of the type $H^{1,2}(t)+H^0(t)$ described in Eq. (\ref{w1}) any unitary transformation in $G_U^{C\gamma^5}$ can be exactly implemented.
The matrix $iI$ generates the group U(1) and the other matrices generate the group of unitary transformations that preserve the bilinear form $\psi^T C\gamma^5\varphi$ which is isomorphic to
$\mathrm{Sp}(2)$ the compact symplectic group of $4\times 4$ matrices (See e.g. Ref. \cite{hall}  Ch. 1.2.8).
Thus $G_U^{C\gamma^5}$ is isomorphic to $\mathrm{U(1)}\times\mathrm{Sp}(2)$.

Next, consider a Dirac particle with momentum $\bold{k}$ and zero mass with a coupling to a Yukawa pseudo-scalar boson $gi\gamma^0\gamma^5\phi$, and allow any bounded values of the electromagnetic four-potential $A_{\mu}(t)$. 
The Dirac Hamiltonian for this case is on the form
\begin{eqnarray}\label{tred1}
H_D=-\sum_{\mu=1,2,3}\gamma^0\gamma^\mu(k_\mu-qA_{\mu}(t))+qA_0(t)I +gi\gamma^0\gamma^5\phi.
\end{eqnarray}
For any time independent Hamiltonian on the form $H_D=\sum_{\mu=1,2,3}a_\mu\gamma^0\gamma^\mu +bi\gamma^0\gamma^5$ where $a_\mu,b\in \mathbb{R}$ the elements $\exp (iH_Dt)$ for $t\in\mathbb{R}$ form a compact one-parameter group. 
 Thus from Theorem \ref{tbv} and Corollary \ref{tbc} follows that the set of unitary transformations generated by strongly continuous Dirac Hamiltonians of the kind in Eq. (\ref{tred1}) is dense in the connected matrix Lie group $G_U^{C}$ generated by the exponentials of the real Lie algebra spanned by the 11 linearly independent matrices $\gamma^0\gamma^5$, $i\gamma^1\gamma^5$, $i\gamma^2\gamma^5$, $i\gamma^3\gamma^5$, $i\gamma^0 \gamma^1$, $i\gamma^0 \gamma^2$, $i\gamma^0 \gamma^3$, $\gamma^1 \gamma^2$, $\gamma^1 \gamma^3$, $\gamma^2 \gamma^3$ and $iI$. If we allow arbitrary Hamiltonians of the type $H^{2,3}(t)+H^0(t)$ described in Eq. (\ref{w2}) any unitary transformation in $G_U^{C}$ can be exactly implemented. The matrix $iI$ generates the group U(1) and the other matrices generate the group of unitary transformations that preserve the bilinear form $\psi^T C\varphi$ which is isomorphic to
$\mathrm{Sp}(2)$. Thus $G_U^{C}$ is isomorphic to $\mathrm{U(1)}\times\mathrm{Sp}(2)$.

Lastly, consider a zero mass Dirac particle with momentum $\bold{k}$ and allow any bounded values of the electromagnetic four-potential $A_{\mu}(t)$.
The Dirac Hamiltonian for this case is on the form
\begin{eqnarray}\label{tred2}
H_D=-\sum_{\mu=1,2,3}\gamma^0\gamma^\mu(k_\mu-qA_{\mu}(t))+qA_0(t)I.
\end{eqnarray}
For any time independent Hamiltonian on the form $H_D=\sum_{\mu=1,2,3}a_\mu\gamma^0\gamma^\mu$ where $a_\mu\in \mathbb{R}$ the elements $\exp (iH_Dt)$ for $t\in\mathbb{R}$ form a compact one-parameter group. 
 Thus from Theorem \ref{tbv} and Corollary \ref{tbc} follows that the set of unitary transformations generated by strongly continuous Dirac Hamiltonians of the kind in Eq. (\ref{tred2}) is dense in the connected matrix Lie group generated by the exponentials of the real Lie algebra spanned by the 7 linearly independent matrices $i\gamma^0 \gamma^1$, $i\gamma^0 \gamma^2$, $i\gamma^0 \gamma^3$, $\gamma^1 \gamma^2$, $\gamma^1 \gamma^3$, $\gamma^2 \gamma^3$ and $iI$.
 If we allow arbitrary Hamiltonians with only second and zeroth degree terms in the gamma matrices any unitary transformation in  this Lie group can be exactly implemented. The matrix $iI$ generates the group U(1) and the other matrices generate the group of unitary transformations that preserve both the bilinear form $\psi^T C\varphi$ and the bilinear form $\psi^T C\gamma^5\varphi$ which is isomorphic to $\mathrm{SU}(2)\times\mathrm{SU}(2)$. Thus the Lie group is $G_U^{C}\cap G_U^{C\gamma^5} $ and is isomorphic to $\mathrm{U}(1)\times\mathrm{SU}(2)\times\mathrm{SU}(2)$.

\subsection{The invariance groups of $|I_1|,|I_2|,|I_{2A}|,|I_{2B}|$ and $|I_3|$}
In Section \ref{uni} the connected matrix Lie group $G_U^{C}$ of all unitary operations that preserve $\psi^T C\varphi$ up to a U(1) phase was described. Since $\psi^T C\varphi$ is a continuous function and invariant also under the spinor representation of the proper orthochronous Lorentz group it follows from Theorem \ref{tbv} that it is invariant,  up to a U(1) phase, under the smallest connected matrix Lie group that contains $G_U^{C}$ and the spinor representation of the proper orthochronous Lorentz group as subgroups.
Similarly, $\psi^T C\gamma^5\varphi$ is invariant, up to a U(1) phase, under the smallest connected matrix Lie group that contains $G_U^{C\gamma^5}$ and the spinor representation of the proper orthochronous Lorentz group as subgroups. Here we describe these Lie groups.

The generators of the spinor representation of the proper orthochronous Lorentz group from Eq. (\ref{gene}) are given by $S^{\rho\sigma}=\frac{1}{4}[\gamma^\rho,\gamma^\sigma]$. Thus, the Lie algebra of the spinor representation of the proper orthochronous Lorentz group is the $\mathbb{R}$-algebra over the matrices  $\gamma^0 \gamma^1$, $\gamma^0 \gamma^2$, $\gamma^0 \gamma^3$, $\gamma^1 \gamma^2$, $\gamma^1 \gamma^3$, and $\gamma^2 \gamma^3$.

The Lie algebra of $G_U^{C}$ is the $\mathbb{R}$-algebra over the matrices $\gamma^5\gamma^0$, $i\gamma^5\gamma^1$,$i\gamma^5\gamma^2$, $i\gamma^5\gamma^3$, $i\gamma^0 \gamma^1$, $i\gamma^0 \gamma^2$, $i\gamma^0 \gamma^3$, $\gamma^1 \gamma^2$, $\gamma^1 \gamma^3$, $\gamma^2 \gamma^3$ and $iI$.
The smallest Lie algebra that contains both this algebra and the Lie algebra of the spinor representation of the proper orthochronous Lorentz group is the $\mathbb{R}$-algebra over the 21 matrices
$\gamma^5\gamma^0$, $i\gamma^5\gamma^0$, $\gamma^5\gamma^1$, $\gamma^5\gamma^2$, $\gamma^5\gamma^3$, $i\gamma^5\gamma^1$, $i\gamma^5\gamma^2$, $i\gamma^5\gamma^3$, $\gamma^0 \gamma^1$, $\gamma^0 \gamma^2$, $\gamma^0 \gamma^3$, $i\gamma^0 \gamma^1$, $i\gamma^0 \gamma^2$, $i\gamma^0 \gamma^3$, $\gamma^1 \gamma^2$, $\gamma^1 \gamma^3$, $\gamma^2 \gamma^3$, $i\gamma^1 \gamma^2$, $i\gamma^1 \gamma^3$, $i\gamma^2 \gamma^3$, and $iI$.  
The matrix $iI$ generates the group U(1) and the remainder of these matrices generate the group of linear transformations that preserve the bilinear form $\psi^T C\varphi$ which is isomorphic to
$\mathrm{Sp}(4,\mathbb{C})$ the symplectic group of $4\times 4$ matrices (See e.g. Ref. \cite{hall}  Ch. 1.2.4). Thus a continuous function on the Hilbert space that it is invariant, up to a U(1) phase, under $G_U^{C}$ and under spinor representations of local proper orthochronous Lorentz transformations is invariant, up to a U(1) phase, also under all transformations in the connected matrix Lie group $G^{C}$ isomorphic to $\mathrm{U(1)}\times\mathrm{Sp}(4,\mathbb{C})$.

The Lie algebra of $G_U^{C\gamma^5}$ is the $\mathbb{R}$-algebra over the matrices $i\gamma^0$, $\gamma^1$,$\gamma^2$,$\gamma^3$, $i\gamma^0 \gamma^1$, $i\gamma^0 \gamma^2$, $i\gamma^0 \gamma^3$, $\gamma^1 \gamma^2$, $\gamma^1 \gamma^3$, $\gamma^2 \gamma^3$ and $iI$.
The smallest Lie algebra that contains both this algebra and the Lie algebra of the spinor representation of the proper orthochronous Lorentz group is the $\mathbb{R}$-algebra over the 21 matrices
$\gamma^0$, $i\gamma^0$, $\gamma^1$, $\gamma^2$, $\gamma^3$, $i\gamma^1$, $i\gamma^2$, $i\gamma^3$, $\gamma^0 \gamma^1$, $\gamma^0 \gamma^2$, $\gamma^0 \gamma^3$, $i\gamma^0 \gamma^1$, $i\gamma^0 \gamma^2$, $i\gamma^0 \gamma^3$, $\gamma^1 \gamma^2$, $\gamma^1 \gamma^3$, $\gamma^2 \gamma^3$, $i\gamma^1 \gamma^2$, $i\gamma^1 \gamma^3$, $i\gamma^2 \gamma^3$, and $iI$.  
The matrix $iI$ generates the group U(1) and the remainder of these matrices generate the group of linear transformations that preserve the bilinear form $\psi^T C\gamma^5\varphi$ which is isomorphic to
$\mathrm{Sp}(4,\mathbb{C})$. Thus a continuous function on the Hilbert space that it is invariant, up to a U(1) phase, under $G_U^{C\gamma^5}$ and under spinor representations of local proper orthochronous Lorentz transformations is invariant, up to a U(1) phase, also under all transformations in the connected matrix Lie group $G^{C\gamma^5}$ isomorphic to $\mathrm{U(1)}\times\mathrm{Sp}(4,\mathbb{C})$.

We can consider also the Lie algebra of $G_U^{C}\cap G_U^{C\gamma^5}$ which is the $\mathbb{R}$-algebra over the matrices $i\gamma^0 \gamma^1$, $i\gamma^0 \gamma^2$, $i\gamma^0 \gamma^3$, $\gamma^1 \gamma^2$, $\gamma^1 \gamma^3$, $\gamma^2 \gamma^3$ and $iI$.
The smallest Lie algebra that contains both this algebra and the Lie algebra of the spinor representation of the proper orthochronous Lorentz group is the $\mathbb{R}$-algebra over the 13 matrices
$\gamma^0 \gamma^1$, $\gamma^0 \gamma^2$, $\gamma^0 \gamma^3$, $i\gamma^0 \gamma^1$, $i\gamma^0 \gamma^2$, $i\gamma^0 \gamma^3$, $\gamma^1 \gamma^2$, $\gamma^1 \gamma^3$, $\gamma^2 \gamma^3$, $i\gamma^1 \gamma^2$, $i\gamma^1 \gamma^3$, $i\gamma^2 \gamma^3$, and $iI$.  
The matrix $iI$ generates the group U(1) and the remainder of these matrices generate the group of linear transformations that preserve both the bilinear form $\psi^T C\gamma^5\varphi$ and the bilinear form $\psi^T C\varphi$ which is isomorphic to
$\mathrm{SL}(2,\mathbb{C})\times \mathrm{SL}(2,\mathbb{C})$. Thus a continuous function on the Hilbert space that it is invariant, up to a U(1) phase, under $G_U^{C}\cap G_U^{C\gamma^5}$ and under spinor representations of local proper orthochronous Lorentz transformations is invariant, up to a U(1) phase, also under all transformations in the connected matrix Lie group $G^{C}\cap G^{C\gamma^5}$ isomorphic to $\mathrm{U}(1)\times\mathrm{SL}(2,\mathbb{C})\times \mathrm{SL}(2,\mathbb{C})$.

In conclusion, continuous functions on the Hilbert space of two Dirac spinors that are constructed to be invariant under local reversible operations as defined in Sections \ref{rep}, \ref{ham} and \ref{spen} are automatically invariant under a Lie group $G_1\otimes G_2$ where the local groups $G_1$ and $G_2$ depend on the classes of physically allowed local Hamiltonians.
For the case of a local Dirac Hamiltonians with nonzero mass, the local group is $G^{C\gamma^5}$. For a local Dirac Hamiltonian with zero mass and a coupling to a Yukawa pseudo-scalar boson, the local group is $G^{C}$. For massless local Dirac Hamiltonians without additional couplings the local group is $G^{C}\cap G^{C\gamma^5}$.
 The function $|I_1|$ is invariant under $G^{C}\otimes G^{C}$, while  $|I_2|$ is invariant under $G^{C\gamma^5}\otimes G^{C\gamma^5}$, $|I_{2A}|$ is invariant under $G^{C}\otimes G^{C\gamma^5}$ and $|I_{2B}|$ is invariant under $G^{C\gamma^5}\otimes G^{C}$. The function $|I_3|$ is invariant under $\mathrm{U}(1)\times\mathrm{SL}(4,\mathbb{C})\otimes\mathrm{U}(1)\times \mathrm{SL}(4,\mathbb{C})$ and thus invariant under $G^{C}\otimes G^{C}$, $G^{C\gamma^5}\otimes G^{C\gamma^5}$, $G^{C}\otimes G^{C\gamma^5}$ and $G^{C\gamma^5}\otimes G^{C}$. It is the orbits of these respective groups that can be partially distinguished by the Lorentz invariant polynomials.

\section{Lorentz invariance of the convex roof extensions of $|I_1|,|I_2|,|I_{2A}|,|I_{2B}|$ and $|I_3|$}\label{con}

The Lorentz invariants $|I_1|,|I_2|,|I_{2A}|,|I_{2B}|$ and $|I_3|$ are defined for sums of tensor products of Dirac spinors. Every such sum $\xi\equiv\sum_{ij}\psi_{ij}\phi_i\otimes\phi_j$ however can be mapped to a positive semi-definite Hermitian rank 1 matrix $\xi\xi^\dagger$ which up to a positive multiplicative constant is a projector onto the subspace spanned by $\xi$. Note that $\xi$ and $e^{i\alpha}\xi$ for $\alpha\in \mathbb{R}$ are mapped to the same $\xi\xi^\dagger$, but up to multiplication of $\xi$ by such a U(1) phase there is a one-to-one correspondence between the set of $\xi$ and the set of $\xi\xi^\dagger$.
Therefore we can view functions on the set of $\xi$ that are invariant under multiplication of $\xi$ by a U(1) phase, such as  $|I_1|,|I_2|,|I_{2A}|,|I_{2B}|$ and $|I_3|$, as functions on the set of matrices $\xi\xi^\dagger$. The matrices $\xi\xi^\dagger$ transform under the spinor representation $S(\Lambda_A)\otimes S(\Lambda_B)$ of local proper orthochronous Lorentz transformations $\Lambda_A\otimes \Lambda_B$ as $\xi\xi^\dagger\to S(\Lambda_A)\otimes S(\Lambda_B)\xi\xi^\dagger S(\Lambda_A)^\dagger\otimes S(\Lambda_B)^\dagger$.

Given the set of matrices $\xi\xi^\dagger$ we can consider conical sums of such matrices, i.e., $\sum_k p_k\xi_k\xi^\dagger_k$ where $p_k\geq 0$. These conical sums are positive semi-definite Hermitian matrices and are the analogues of density matrices. Such a density matrix $\rho$ can in general be decomposed as a conical sum of rank 1 matrices $\xi\xi^\dagger$ in more than one way, i.e., the decomposition is not unique.

Any continuous real valued function $g$ defined on the rank 1 matrices $\xi\xi^\dagger$ can be extended to a function on the set of density matrices as a {\it convex roof extension} \cite{lima,wakker,uhlmannn}. The convex roof extension $g^{\cup}$ of $g$ is defined as

\begin{eqnarray}
g^{\cup}(\rho)\equiv \inf_{p_k\geq 0,\xi_k|\rho=\sum_k p_k \xi_k\xi_k^\dagger}\sum_k p_k g(\xi_k\xi^\dagger_k),
\end{eqnarray} 
where the infimum is taken over all possible decompositions of $\rho$. While the formal definition of a convex roof extension is straightforward it may be difficult to calculate.

Now consider a spinor representation of local proper orthochronous Lorentz transformations acting on $\rho$ and the $\xi_k$ as $\rho\to\rho'$ and $\xi_k\to\xi'_k$. Since Lorentz transformations are invertible it follows that $\rho$ can be decomposed as $\rho=\sum_k p_k\xi_k\xi^\dagger_k$ if and only if $\rho'$ can be decomposed as $\rho'=\sum_k p_k\xi'_k\xi'^\dagger_k$. Therefore the set of decompositions of $\rho$ is in one-to-one correspondence with the set of decompositions of $\rho'$. 

Next assume that the function $g$ is invariant under the spinor representations of local proper orthochronous Lorentz transformations. Then $g(\xi_k\xi^\dagger_k)=g(\xi'_k\xi'^\dagger_k)$ for all $\xi_k$ and thus $\sum_k p_k g(\xi_k\xi^\dagger_k)=\sum_k p_k g(\xi'_k\xi'^\dagger_k)$ for all $p_k,\xi_k$. Therefore, it follows that the convex roof extension $g^{\cup}$ of $g$ is
invariant under the spinor representations of local proper orthochronous Lorentz transformations. 
Examples of such functions $g$ that can be given Lorentz invariant convex roof extensions are $|I_1|,|I_2|,|I_{2A}|,|I_{2B}|$ and $|I_3|$. Note that since  $|I_1|,|I_2|,|I_{2A}|,|I_{2B}|$ and $|I_3|$ are identically zero for product states their convex roof extensions are identically zero for any $\rho$ that can be decomposed as a conical sum $\sum_k p_k \xi_k\xi^\dagger_k$ where all $\xi_k\xi^\dagger_k$ correspond to product states, i.e., any separable state.

Density matrices can be used to describe states that are incoherent mixtures, for example states where there are uncertainties in the preparation procedure. If no uncertainty exists in the momentum degrees of freedom and all states in the incoherent mixture have the same definite momenta, any decomposition of the density matrix involves only $\xi_k\xi^\dagger_k$ with the same definite momenta. Therefore all such $\xi_k\xi^\dagger_k$ transform the same way when acted on by the evolution operator.
Thus in this case the convex roof extension $g^{\cup}$ of a Lorentz invariant $g$ is not only Lorentz invariant but also invariant under the same class of unitary evolutions as $g$. For incoherent mixtures without definite momenta the convex roof extensions of the Lorentz invariants $|I_1|,|I_2|,|I_{2A}|,|I_{2B}|$ or $|I_3|$ are not defined since the Lorentz invariants themselves have not been defined for this case.

One way that density matrices on the form $\rho=\sum_k p_k\xi_k\xi^\dagger_k$ can appear is as the result of taking the partial trace over the momentum degrees of freedom.
If a state $\psi_{AB}$ with multiple momentum components is considered 

\begin{eqnarray}
\psi_{AB}=\sum_{\bold{k_A},\bold{k_B}}\sum_{j_A,j_B}\psi_{j_A,j_B,\bold{k_A},\bold{k_B}}\phi_{j_A}e^{i\bold{k_A}\cdot\bold{x_A}}\otimes \phi_{j_B}e^{i\bold{k_B}\cdot\bold{x_B}},
\end{eqnarray}
we can treat the terms corresponding to the different pairs of momenta $\bold{k_A},\bold{k_B}$ independently. From these we can construct a Hermitian $16|\bold{k_A},\bold{k_B}|\times 16|\bold{k_A},\bold{k_B}|$ matrix $M$ where $|\bold{k_A},\bold{k_B}|$ is the number of different pairs of momenta and choose it such that it is composed by $16\times 16$ block matrices $M_{\bold{k_A},\bold{k_B};\bold{m_A},\bold{m_B}}$ that are indexed by the pairs of momenta $\bold{k_A},\bold{k_B}$ and $\bold{m_A},\bold{m_B}$ and defined by

\begin{eqnarray}
M_{\bold{k_A},\bold{k_B};\bold{m_A},\bold{m_B}}=&&\left(\sum_{j_A,j_B}\psi_{j_A,j_B,\bold{k_A},\bold{k_B}}(t)\phi_{j_A}e^{i\bold{k_A}\cdot\bold{x_A}}\otimes \phi_{j_B}e^{i\bold{k_B}\cdot\bold{x_B}}\right)\nonumber\\\times &&\left(\sum_{l_A,l_B}\psi_{l_A,l_B,\bold{m_A},\bold{m_B}}^*\phi^\dagger_{l_A}e^{-i\bold{m_A}\cdot\bold{x_A}}\otimes \phi^\dagger_{l_B}e^{-i\bold{m_B}\cdot\bold{x_B}}\right).\nonumber\\
\end{eqnarray}
If we then take the partial trace over the pairs of momenta the result is a $16\times 16$ matrix

\begin{eqnarray}
&&\sum_{\bold{k_A},\bold{k_B}}M_{\bold{k_A},\bold{k_B};\bold{k_A},\bold{k_B}}\nonumber\\
=&&\sum_{\bold{k_A},\bold{k_B}}\left(\sum_{j_A,j_B}\psi_{j_A,j_B,\bold{k_A},\bold{k_B}}\phi_{j_A}e^{i\bold{k_A}\cdot\bold{x_A}}\otimes \phi_{j_B}e^{i\bold{k_B}\cdot\bold{x_B}}\right)\nonumber\\&&\phantom{kk}\times\left(\sum_{l_A,l_B}\psi_{l_A,l_B,\bold{k_A},\bold{k_B}}^*\phi^\dagger_{l_A}e^{-i\bold{k_A}\cdot\bold{x_A}}\otimes \phi^\dagger_{l_B}e^{-i\bold{k_B}\cdot\bold{x_B}}\right)\nonumber\\
=&&\sum_{\bold{k_A},\bold{k_B}}\left(\sum_{j_A,j_B}\psi_{j_A,j_B,\bold{k_A},\bold{k_B}}\phi_{j_A}\otimes \phi_{j_B}\right)\left(\sum_{l_A,l_B}\psi_{l_A,l_B,\bold{k_A},\bold{k_B}}^*\phi^\dagger_{l_A}\otimes \phi^\dagger_{l_B}\right)\nonumber\\
\equiv &&\sum_{\bold{k_A},\bold{k_B}}\xi_{\bold{k_A},\bold{k_B}}\xi^\dagger_{\bold{k_A},\bold{k_B}},
\end{eqnarray}
which is a conical sum of positive rank 1 Hermitian matrices $\xi_{\bold{k_A},\bold{k_B}}\xi^\dagger_{\bold{k_A},\bold{k_B}}$ corresponding to the different pairs of momenta. It is the analogue of a reduced density matrix for the spinorial degrees of freedom.

The convex roof extensions of $|I_1|,|I_2|,|I_{2A}|,|I_{2B}|$ or $|I_3|$ are thus invariants of the spinor representations of the local proper orthochronous Lorentz groups on the set of reduced density matrices constructed as partial traces over the momentum degrees of freedom. 
Note however, that as described in Appendix \ref{dwalin} the evolution generated by local Dirac Hamiltonians is conditioned on the momentum. Thus the density matrices constructed as partial traces over momentum degrees of freedom do in general not have any well defined transformation properties under such local evolution. Consequently, the convex roof extensions of $|I_1|,|I_2|,|I_{2A}|,|I_{2B}|$ or $|I_3|$ are in general not invariant under local unitary evolution generated by Dirac Hamiltonians on the set of reduced density matrices constructed as partial traces over the momentum degrees of freedom.

\section{A comment on the infinite dimensional representations of the Lorentz group}\label{kili}

In this work the spinor representation of the Lorentz group has been considered in the context of relativistic quantum mechanics. Several previous works \cite{czachor,alsing,mano,caban,caban3,leon,tessier,geng,caban2,terno2,ahn,tera,tera2,terno,adami} however have considered an infinite dimensional unitary representations of the Lorentz group acting on a Hilbert space, often in the context of a relativistic Quantum Field Theory formalism. Therefore we comment on one of the crucial differences between these descriptions.

In a relativistic Quantum Field Theory formalism the solutions to the Dirac equation are re-imagined as operator valued fields acting on an infinite dimensional Hilbert space. The Hilbert space basis vectors $|s,\bold{k}\rangle$ are labelled by {\it rest frame spin} $s$ and three-momentum $\bold{k}$ (See e.g. Ref. \cite{weiny}) and transform under an infinite dimensional unitary representation of the Lorentz group. Let $F(\Lambda)$ be the representation of a Lorentz transformation $\Lambda$ acting on this Hilbert space.
 If the representation $F(L_k)$ of a pure Lorentz boost $L_k$ to momentum $\bold{k}$ acts on a rest frame basis vector $|s,\bold{0}\rangle$ the resulting vector is defined as $|s,\bold{k}\rangle\equiv F(L_k) |s,\bold{0}\rangle$. Here the label for the spin does not change, i.e., the spin degree of freedom is still described by its rest frame value, and is thus independent of any boost. Together, the momentum and the rest frame spin completely specify the spin of the particle, but on its own the rest frame spin is insufficient.

If a general Lorentz transformation $\Lambda$ is considered we can use the boost independence of the rest frame spin to re-express the action of the representation $F(\Lambda)$ on a basis vector as

\begin{eqnarray}
F(\Lambda)|s,\bold{k}\rangle=F(\Lambda)F( L_k)|s,\bold{0}\rangle=F(L_{\Lambda k})F(L_{\Lambda k}^{-1}\Lambda L_k)|s,\bold{0}\rangle,\nonumber\\
\end{eqnarray}
where $L_{\Lambda k}$ is the pure Lorentz boost from the rest frame to the momentum $\Lambda\bold{k}$ of $F(\Lambda)|s,\bold{k}\rangle$ and $L_{\Lambda k}^{-1}$ is the pure Lorentz boost that brings the vector back to the rest frame, i.e., the inverse Lorentz boost to $L_{\Lambda k}$. The sequence of Lorentz transformations $L_{\Lambda k}^{-1}\Lambda L_k$ changes the momentum away from $\bold{0}$ and then back to $\bold{0}$ again. It is therefore a pure rotation called a Wigner rotation \cite{wigner}. The representation $F(L_{\Lambda k}^{-1}\Lambda L_k)$ of this rotation acts on the rest frame spin so that

\begin{eqnarray}
F(\Lambda)|s,\bold{k}\rangle &&=F(L_{\Lambda k})\sum_{s'}C_{s,s'}(\Lambda,\bold{k})|s',\bold{0}\rangle\nonumber\\
 &&=\sum_{s'}C_{s,s'}(\Lambda,\bold{k})|s',\Lambda\bold{k}\rangle,
\end{eqnarray}
where $C_{s,s'}(\Lambda,\bold{k})$ are the matrix elements of the representation of the Wigner rotation. 
The Wigner rotation is a function of both the Lorentz transformation $\Lambda$ and the initial momentum $\bold{k}$.
In other words this infinite dimensional representation of the Lorentz group acts on the rest frame spin conditioned on the momentum. 

Since the specification of the spin requires both the rest frame spin and the momentum, it is not in general meaningful to construct a reduced density matrix for the rest frame spin by partially tracing over the momentum.
By discarding the momentum degrees of freedom we loose information needed for a physical interpretation. In particular, the resulting object does not contain the information necessary to specify how it transforms under Lorentz transformations. It does not have any well defined transformation properties under any representation of the Lorentz group. The notion of a reduced rest frame spin state of a particle that is independent of the momentum degrees of freedom is thus not physically meaningful except for the case where all basis vectors in the state expansion have the same momentum. Then the same Wigner rotation acts on all rest frame spin labels and the reduced density matrix can be given a physical interpretation. 
Moreover, the rank of the  rest frame spin reduced density matrix can change under Lorentz transformations. Consider for example a state $1/\sqrt{2}(|s,\bold{k}_1\rangle+|s,\bold{k}_2\rangle)$. If we naively write the state as $1/\sqrt{2}(|\bold{k}_1\rangle+|\bold{k}_2\rangle)\otimes|s\rangle$ and then construct the reduced density matrix for the rest frame spin by taking the partial trace over the momentum we obtain the rank one matrix $|s\rangle\langle s|$. A Lorentz transformation of the state  in general result in a new state for which the rest frame spin reduced density matrix is no longer rank one because of the momentum dependence of the Wigner rotations. 
These issues with the physical interpretation of reduced spin density matrices were pointed out in Ref. \cite{terno}.   

For the same reason the notion of entanglement between the rest frame spin degrees of freedom of two particles does not have a physically meaningful description independently of the momentum degrees of freedom
unless the particles have definite momenta. If one traces out the momentum degrees of freedom in a system of two particles without definite momenta the resulting reduced density matrix for the two rest frame spins has no physical interpretation and no transformation law under the local Lorentz groups. As described in Ref. \cite{adami} the rank of such a reduced density matrix can change under Lorentz transformations.  Moreover the mathematical counterpart of entanglement of the rest frame spins as quantified by the Wootters concurrence in general changes  under Lorentz transformations \cite{adami}.

In contrast this work considers the finite dimensional spinor representation of the Lorentz group that does not act conditioned on momentum. Therefore the analogue of reduced density matrices for spinors are well defined and transform under the spinor representation of the local Lorentz groups as explained in Appendix \ref{con}.

\end{document}